\newcommand{\SG}{\mathsf{SG}}
\newcommand{\SSet}{\mathsf{SS}}
\newcommand{\XORT}{\mathbb{T}^{\oplus}}
\newcommand{\XORTe}[1]{\XORT_{#1}}
\newcommand{\Theories}{\oplus\text{-}\textbf{Th}}
\newcommand{\Structures}{\oplus\text{-}\textbf{Str}}
\newcommand{\XORM}{\mathbb{M}^\oplus}
\newcommand{\XORMt}[1]{\XORM_{#1}}
\newcommand{\tuple}[1]{\left\langle #1 \right\rangle}
\newcommand*{\LargerCdot}{\raisebox{-0.8ex}{\scalebox{2}{$\cdot$}}}
\newcommand{\action}{\LargerCdot}
\newcommand{\GHZt}{\mathsf{GHZ}}
\newcommand{\GHZ}{\ket{\GHZt}}
\newcommand{\GHZnt}{\mathsf{GHZ}(n)}
\newcommand{\GHZn}{\ket{\GHZnt}}
\newcommand{\comment}[1]{}
\newcommand{\ket}[1]{|#1\rangle}
\newcommand{\braket}[2]{\langle#1|#2\rangle}
\newcommand{\kp}{\ket{\psi}}
\newcommand{\Pauli}{\mathsf{P}}
\newcommand{\PN}{\Pauli_n}
\newcommand{\Clifford}{\mathsf{C}}
\newcommand{\rank}{\text{rank}~}
\newcommand{\biff}{\quad\Leftrightarrow\quad}
\newcommand{\U}{\mathcal{U}}
\newcommand{\MX}{\mathscr{X}}
\newtheorem{theorem}{\bf Theorem}[section]
\newtheorem{corollary}{\bf Corollary}[section]
\newtheorem{proposition}{\bf Proposition}[section]
\newtheorem{lemma}{\bf Lemma}[section]
\theoremstyle{definition}
\newtheorem{defn}{\bf Definition}[section]
\newcommand{\ie}{\textit{i.e.}~}
\begin{document}

\title{A complete characterisation of All-versus-Nothing arguments for stabiliser states}

\author{
Samson Abramsky$^{1}$, Rui Soares Barbosa$^{1}$, Giovanni Car\`{u}$^{1}$, Simon Perdrix$^{2}$}

\address{$^{1}$Department of Computer Science, University of Oxford, Wolfson Building, Parks Road, Oxford OX1 3QD, U.K.\\
$^{2}$ CNRS, LORIA, Universit\'{e} de Lorraine, Campus Scientifique, F54506 Vand\oe uvre-l\`es-Nancy, France}

\subject{Quantum information, quantum foundations}

\keywords{Contextuality, all-versus-nothing arguments, stabiliser states, graph states}

\corres{
\email{samson.abramsky@cs.ox.ac.uk}\\
\email{rui.soares.barbosa@cs.ox.ac.uk}\\
\email{giovanni.caru@cs.ox.ac.uk}\\
\email{simon.perdrix@loria.fr}}

\begin{abstract}
An important class of contextuality arguments in quantum foundations are the All-versus-Nothing (AvN) proofs, generalising a construction originally due to Mermin.
We present a general formulation of All-versus-Nothing arguments, and a complete characterisation of all such arguments which arise from stabiliser states.
We show that every AvN argument for an $n$-qubit stabiliser state can be reduced to an AvN proof for a three-qubit state which is local Clifford-equivalent to the tripartite GHZ state.
This is achieved through a combinatorial characterisation of AvN arguments, the AvN triple Theorem, whose proof makes use of the theory of graph states.
This result enables the development of a computational method to generate all the AvN arguments in $\mathbb{Z}_2$ on  $n$-qubit stabiliser states.
We also present new insights into the stabiliser formalism and its connections with logic.
\end{abstract}


\begin{fmtext}
\section{Introduction}
Since the  classic no-go theorems by Bell \cite{Bell} and Bell--Kochen--Specker \cite{Kochen, Bell-KS}, contextuality has gained great importance in the development of quantum information and computation. This key characteristic feature of quantum mechanics represents one of the most valuable resources at our disposal to break through the limits of classical computation and information processing, with various concrete applications e.g.
in quantum computation speed-up \cite{Howard, Raussendorf4}
and in device-independent quantum security \cite{Horodecki}. Bell's proof of his famous theorem, as well as the subsequent formulation due to Clauser, Horne, Shimony, and Holt~\cite{CHSH}, relies on the derivation of
\end{fmtext}
%
%
\maketitle
\noindent
an inequality involving empirical probabilities, which must be satisfied by any local theory but is violated by quantum mechanics.
Greenberger, Horne, Shimony, and Zeilinger~\cite{Greenberger, Greenberger2} gave a stronger proof of contextuality, which only revolves around possibilistic features of quantum predictions, without taking into account the actual values of the probabilities. This version of the phenomenon corresponds to the notion of \emph{strong contextuality} in the hierarchy introduced by \cite{Abramsky1}.
In 1990, Mermin presented a simpler proof of this phenomenon, which rests on deriving an inconsistent system of equations in $\mathbb{Z}_2$, and became known in the literature as the original  ``All-vs-Nothing'' (AvN) argument~\cite{Mermin}.
Recent work on the mathematical structure of contextuality \cite{Abramsky1} allowed a powerful formalisation and generalisation of Mermin's proof to a large class of examples in quantum mechanics using stabiliser theory \cite{Abramsky2}.
This work is of particular significance in the context of measurement-based quantum computation (MBQC), where AvN contextuality plays an important r\^ole \cite{Hoban, Raussendorf4,Okay17}.
The general theory of AvN arguments introduced in \cite{Abramsky2} raises the natural question of whether it is possible to identify the states admitting this type of proof of contextuality. In the present paper, we address this question by combining the general theory of AvN arguments with the graph state formalism~\cite{Hein}. We summarise our results:
\begin{itemize}
\item We show that generalised AvN arguments on stabiliser states can be completely characterised by \emph{AvN triples}, proving part of what was previously known as the \emph{AvN triple conjecture} \cite{AvN}. 
\end{itemize}
AvN triples are triples of elements of the Pauli $n$-group satisfying certain combinatorial properties.
The presence of such a triple in a stabiliser group  was proved in \cite{Abramsky2} to be a sufficient condition for AvN contextuality.
We show that the converse of this statement is also true for the case of \emph{maximal} stabiliser subgroups, or equivalently, for stabiliser states, yielding a complete combinatorial characterisation of AvN arguments using stabiliser states. 
This new description of AvN arguments leads to another important result:
\begin{itemize}
\item We prove that any AvN argument on an $n$-qubit stabiliser state can be reduced to an AvN argument only involving three qubits which are local Clifford-equivalent to the tripartite GHZ state. 
\end{itemize}

The last part of the paper is dedicated to an application of this characterisation:
\begin{itemize}
\item We present a computational method to generate all AvN arguments for $n$-qubit stabiliser states (for $n$ sufficiently small).
\end{itemize}
Until now, we had a rather limited number of examples of quantum-realisable strongly contextual models giving rise to AvN arguments. The technique we introduce here gives us a large amount of instances of this type of models.

These findings are introduced after a theoretical digression on general All-vs-Nothing arguments for stabiliser states, which provides interesting new insights on their relationship to logic. In particular:
\begin{itemize}
\item We represent the well-known link between subgroups of the Pauli $n$-group and their stabilisers in the Hilbert space of $n$-qubits as a Galois connection, which allows us to establish a relationship with the Galois connection between syntax and semantics in logic (see e.g. \cite{Smith})
\end{itemize}
Previous work has already shown intriguing connections between logic and contextuality \cite{Abramsky4,Abramsky2,Kishida16:logic,AbramskyEtAl17:QuantumMonad}.
For instance, a direct link between the structure of quantum contextuality and classic semantic paradoxes is shown in \cite{Abramsky2}.
The Galois correspondence we exhibit is a step towards a precise understanding of this interaction.

\paragraph{Outline} The remainder of the paper is organised as follows. In Section \ref{sec: Mermin}, we recall the original All-vs-Nothing argument by Mermin, and generalise it to a class of  empirical models. Section \ref{sec: stabiliser} introduces the stabiliser formalism, the Galois correspondence, and the connections with logic. In Section \ref{sec: AvN triple conjecture}, we prove the AvN triple theorem and investigate its consequences. Finally, in Section \ref{sec: method}, we present and illustrate the method to generate AvN triples. 

\section{All-vs-Nothing arguments}\label{sec: Mermin}

\subsection{Mermin's original formulation}
In this section, we review Mermin's AvN proof of strong contextuality of the GHZ state.

Recall the definition of the \textbf{Pauli operators}, dichotomic observables corresponding to measuring spin in the $x$, $y$, and $z$ axes, with eigenvalues $\pm 1$
\[
X\coloneqq \left(
\begin{matrix}
0 & 1 \\
1 & 0
\end{matrix}
\right)
~~
Y\coloneqq \left(
\begin{matrix}
0 & -i \\
i & 0
\end{matrix}
\right)
~~
Z\coloneqq \left(
\begin{matrix}
1 & 0 \\
0 & -1
\end{matrix}
\right)
\]
These matrices are self-adjoint, have eigenvalues $\pm 1$, and together with identity matrix $I$ satisfy the following relations:
\begin{gather}
X^2=Y^2=Z^2=I \nonumber \\ 
XY=iZ,~~ YZ=iX,~~ ZX=iY, \label{equ: Pauli's}\\
YX=-iZ,~~ZY=-iX,~~XZ=-iY. \nonumber
\end{gather}

The GHZ state is a tripartite  qubit state, defined by
\[
\GHZ\coloneqq \frac{1}{\sqrt{2}}(\ket{000}+\ket{111}).
\]
We consider a tripartite measurement scenario where each party $i=1,2,3$ can perform a Pauli measurement in $\{X_i, Y_i\}$ on GHZ,
obtaining, as a result, an eigenvalue in $\{\pm1\}$.\footnote{It is convenient to relabel $+1, -1, \times$ as $0,1,\oplus$ respectively, where $\oplus$ is addition modulo $2$. The eigenvalues of a joint measurement $A_1\otimes A_2\otimes\dots\otimes A_n$ are the products of eigenvalues at each site, so they are also $\pm1$ Thus, joint measurements are still dichotomic and only distinguish joint outcomes up to parity\label{note1}.} By adopting the viewpoint of \cite{Abramsky1}, this experiment can be seen as an empirical model whose support is partially described by Table \ref{tab: GHZ}
(the remaining choices of measurements give rows with full support and can therefore be ignored).
\begin{table}[htbp]
\centerline{
\begin{tabular}{c c c | c c c c c c c c}
\hline
$1$ & $2$ & $3$ & $000$ & $001$ & $010$ & $011$ & $100$ & $101$ & $110$ & $111$\\
\hline
\dots & \dots & \dots & \dots & \dots & \dots & \dots & \dots & \dots & \dots & \dots\\
$X_1$ & $X_2$ & $X_3$ & $1$ & $0$ & $0$ & $1$ & $0$ & $1$ & $1$ & $0$ \\
$X_1$ & $Y_2$ & $Y_3$ & $0$ &  $1$ &  $1$ &  $0$ &  $1$ &  $0$ &  $0$&  $1$\\
$Y_1$ & $X_2$ & $Y_3$ & $0$ &  $1$ &  $1$ &  $0$ &  $1$ &  $0$ &  $0$&  $1$\\
$Y_1$ & $Y_2$ & $X_3$ & $0$ &  $1$ &  $1$ &  $0$ &  $1$ &  $0$ &  $0$&  $1$\\
\dots & \dots & \dots & \dots & \dots & \dots & \dots & \dots & \dots & \dots & \dots\\
\end{tabular}
}
\caption{A partial table for the support of the GHZ model.}\label{tab: GHZ}
\end{table}

\noindent These partial entries of the table can be characterised by the following equations in $\mathbb{Z}_2$:
\begin{alignat*}{14}
\bar{X}_1 & {} \oplus {} & \bar{X}_2 & {} \oplus {} & \bar{X}_3 & {} = 0&
&\qquad\qquad\qquad\qquad&
\bar{Y}_1 & {} \oplus {} & \bar{X}_2 & {} \oplus {} & \bar{Y}_3 & {} = 1&
\\
\bar{X}_1 & {} \oplus {} & \bar{Y}_2 & {} \oplus {} & \bar{Y}_3 & {} = 1&
&&
\bar{Y}_1 & {} \oplus {} & \bar{Y}_2 & {} \oplus {} & \bar{X}_3 & {} = 1&,
\end{alignat*}
where $\bar{P}_i\in\mathbb{Z}_2$ denotes the outcome of the measurement $P_i$, for all $P_i\in\{X_i, Y_i\}$. 
It is straightforward to see that this system is inconsistent. Indeed, if we sum all the equations, we obtain $0=1$, as each variable appears twice on the left-hand side. This means that we cannot find a global assignment $\{X_1, Y_1, X_2, Y_2, X_3, Y_3\}\rightarrow\{0,1\}$ consistent with the model, showing that the GHZ state is strongly contextual. 

\subsection{General setting}
The description of empirical models as generalised probability tables provided by \cite{Abramsky1} allows us to generalise Mermin's argument to a larger class of examples. 

A \textbf{measurement scenario} is a tuple $\tuple{\MX,\U,O}$ where
$\MX$ is a finite set of measurement labels, $O$ a finite set of measurement outcomes,
and $\U$ is
a family $\U \subseteq \mathcal{P}(\MX)$ that is downwards-closed and satisfies $\bigcup \U=\MX$.
The elements of $\U$ are the \textbf{measurement contexts},
i.e. the sets of measurements that can be  performed jointly.
An \textbf{empirical model} over the scenario $\tuple{\MX,\U,O}$ is a family $\{e_U\}_{U\in\U}$ indexed by the contexts, where $e_U$ is a probability distribution on $O^U$, the set of joint outcomes for the measurements in the context $U$.
We can think of an empirical model as a table with rows indexed by the contexts $U \in \U$, columns indexed by the joint outcomes $s \in O^U$, and entries given by the probabilities $e_U(s)$.
In the examples considered here, we typically only present the rows corresponding to maximal contexts, as probabilities of outcomes for smaller contexts are determined from these by marginalisation -- this corresponds to a compatibility condition which is a generalised form of no-signalling \cite{Abramsky1}.


An empirical model $e$ is \textbf{strongly contextual} if there is no global assignment $g :\MX \rightarrow O$ which is \emph{consistent} with $e$ in the sense that the restriction of $g$ to each context $U$ yields a joint outcome for that context which is possible (has non-zero probability) in $e$.
Formally, there is no $g : \MX \rightarrow O$ such that, for all $U \in \U$, $e_U(g |_U) >0$.

In the setting we will consider, all the measurements are dichotomic and produce outcomes in $\mathbb{Z}_2$.\footnote{More generally, one can define the notion of AvN arguments for models with outcomes valued in a unital, commutative ring. However, for the purposes of this paper, we are only concerned with All-vs-Nothing parity arguments arising from stabiliser quantum theory.} 

To an empirical model $e\coloneqq \{e_U\}_{U\in\U}$ we associate an \textbf{XOR theory} $\XORTe{e}$.
It uses a $\mathbb{Z}_2$-valued variable $\bar{x}$ for each measurement label $x \in \MX$.
For each context $U\in\U$, $\XORTe{e}$ will include the assertion 
\[
\bigoplus_{x\in U} \bar{x}=0,
\]
whenever the support of $e_U$ only contains joint outcomes of even parity
i.e. with an even number of $1$s,
and 
\[
\bigoplus_{x\in U} \bar{x}=1
\]
whenever it only contains joint outcomes of odd parity, i.e. with an odd number of $1$s.
In other words, it includes the assertion $\bigoplus_{x \in U} \bar{x} = o$ whenever $\bigoplus_{x \in U}s(x) = o$ for all $s \in \mathbb{Z}_2^U$ such that $e_U(s) > 0$.

We say that the model $e$ is \textbf{AvN} if the theory $\XORTe{e}$ is inconsistent. Since an inconsistent theory implies the impossibility of defining a consistent global assignment $\MX\rightarrow \mathbb{Z}_2$, we have the following result.

\begin{proposition}[{\cite[Proposition 7]{Abramsky2}}]
If an empirical model $e$ is AvN, then it is strongly contextual. 
\end{proposition}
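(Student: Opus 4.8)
The plan is to prove the contrapositive: I would show that if the empirical model $e$ is \emph{not} strongly contextual, then its XOR theory $\XORTe{e}$ is consistent. By definition, non-strong-contextuality means there exists a global assignment $g : \MX \to \mathbb{Z}_2$ that is consistent with $e$, in the sense that for every context $U \in \U$ the restriction $g|_U$ lies in the support, i.e. $e_U(g|_U) > 0$. The goal is to use this single assignment $g$ to produce a model (a satisfying valuation of the variables $\bar{x}$) for the theory $\XORTe{e}$.

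The key step is to take the valuation $\bar{x} := g(x)$ for each $x \in \MX$ and check that it satisfies every assertion in $\XORTe{e}$. Recall from the construction that $\XORTe{e}$ contains the assertion $\bigoplus_{x \in U} \bar{x} = o$ only when \emph{every} $s$ in the support of $e_U$ satisfies $\bigoplus_{x \in U} s(x) = o$. Since $g|_U$ is itself in the support of $e_U$ (as $e_U(g|_U) > 0$), we have in particular $\bigoplus_{x \in U} g(x) = o$. Substituting the valuation $\bar{x} = g(x)$, the left-hand side of the assertion evaluates to exactly $o$, so the assertion holds. As this is true for every context $U$, the valuation $g$ satisfies all assertions of $\XORTe{e}$ simultaneously, witnessing its consistency.

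This completes the contrapositive and hence the proposition: if $\XORTe{e}$ is inconsistent (i.e. $e$ is AvN), then no such consistent global assignment $g$ can exist, so $e$ is strongly contextual. I do not anticipate any genuine obstacle here; the argument is essentially a direct unwinding of the two definitions, with the only point requiring care being the precise logical form of the XOR theory's construction — namely that an assertion is included for a context $U$ \emph{only if} all supported outcomes share the same parity $o$. This ``only if'' is exactly what guarantees that any supported outcome, and in particular $g|_U$, respects the asserted parity, which is the crux of the verification.
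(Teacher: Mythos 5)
Your proof is correct and takes essentially the same route as the paper, which justifies this proposition with the one-line observation that an inconsistent theory precludes any consistent global assignment $\MX \to \mathbb{Z}_2$; your contrapositive argument is just the careful unwinding of that remark, with the key point (an assertion is included for $U$ only when \emph{all} supported outcomes share its parity, so $g|_U$ in particular does) correctly identified.
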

As an example, we consider the Popescu--Rohrlich (PR) Box model \cite{PR} given in Table \ref{tab: PR}. 
\begin{table}[htbp]
\centering
\begin{tabular}{c c | c c c c}
\hline
$A$ & $B$ & $00$ & $10$ & $01$ & $11$\\
\hline
$a_1$ & $b_1$ & $1$ & $0$ & $0$ & $1$\\
$a_1$ & $b_2$ & $1$ & $0$ & $0$ & $1$\\
$a_2$ & $b_1$ & $1$ & $0$ & $0$ & $1$\\
$a_2$ & $b_2$ & $0$ & $1$ & $1$ & $0$\\
\end{tabular}
\caption{the PR-box model}\label{tab: PR}
\end{table}

\noindent The XOR theory of the PR box model consists of the following 4 equations
\[
\arraycolsep=1pt
\begin{array}{ccccc}
\bar{a}_1 & \oplus & \bar{b}_1 & = & 0\\
\bar{a}_1 & \oplus & \bar{b}_2 & = & 0
\end{array}\qquad\qquad\qquad
\begin{array}{cccccc}
\bar{a}_2 & \oplus & \bar{b}_1 & = & 0\\
\bar{a}_2 & \oplus & \bar{b}_2 & = & 1,
\end{array}
\]
which lead to $0=1$. Hence, the theory is inconsistent and the model is strongly contextual.

Although it is obviously not always the case that parity equations can fully describe the support of an empirical model, this setting is particularly suited for the study of strong contextuality in stabiliser states, as we shall see in the following sections. 

\section{The stabiliser world}\label{sec: stabiliser}

Stabiliser quantum mechanics \cite{Chuang, Caves} is a natural setting for general All-vs-Nothing arguments, and allows us to see how AvN models can arise from quantum theory. In this section, we recall the main definitions and present new insights on the connection between AvN arguments and logical paradoxes. 

\subsection{Stabiliser subgroups}

Let $n\ge 1$ be an integer. The \textbf{Pauli} $n$-\textbf{group} $\PN$ is the group whose elements have the form $\alpha(P_1,\ldots,P_n)$, where $(P_i)_{i=1}^n$ is an $n$-tuple of  Pauli operators,  $P_i\in\{X, Y, Z, I\}$, with global phase $\alpha \in \{ \pm1, \pm i\}$. The multiplication is defined by $\alpha(P_1,\ldots , P_n) \beta(Q_1,\ldots , Q_n) = \gamma(R_1,\ldots,R_n)$, where $P_iQ_i = \gamma_iR_i$, $\gamma = \alpha\beta(\prod_i \gamma_i)$.
 The unit is $(I)_{i=1}^n$. The group $\PN$ acts on the Hilbert space of $n$-qubits $H_n\coloneqq (\mathbb{C}^2)^{\otimes n}$ via the action
\begin{equation}\label{equ: action}
\alpha(P_i)_{i=1}^n~\action~\ket{\psi_1}\otimes\cdots\otimes\ket{\psi_n}\coloneqq \alpha P_1\ket{\psi_1}\otimes\cdots\otimes P_n\ket{\psi_n}
\end{equation}
Given a subgroup $S\leq \PN$, the \textbf{stabiliser} of $S$ is the linear subspace
\[
V_S\coloneqq  \{\ket{\psi}\in H_n\mid \forall P\in S. P\action\ket{\psi}=\ket{\psi}\}.
\]
The subgroups of $\PN$ that stabilise non-trivial subspaces must be commutative, and only contain elements with global phases $\pm1$. 
We call such subgroups \emph{stabiliser subgroups}.

\subsection{Stabiliser subgroups induce XOR theories}

We are interested in a quantum measurement scenario where $n$ parties share an $n$-qubit state and can each choose to perform a local Pauli operator $X$, $Y$, or $Z$ on their respective qubit.
The set of measurement labels is thus $\MX = \bigcup_{i=1}^n\{x_i,y_i,z_i\}$, and the contexts
are subsets of $\MX$ that contain at most one element for each index $i$,
as each party can only perform at most one measurement.
In other words, the maximal contexts are the sets $\{m_1, \ldots, m_n\}$ where
$m_i \in \{x_i,y_i,z_i\}$, and the contexts are subsets of these.
Note that this is a Bell-type scenario, of the kind considered in discussions of non-locality, a particular case of contextuality.
However, we will often only need to consider a subset of the contexts -- and even a subset of the measurement labels -- in order to derive an AvN argument.

Let $P=\alpha (P_i)_{i=1}^n\in\PN$ be an element of the Pauli $n$-group and
$\kp\in H_n$ a state stabilised by $P$ (hence we must have $\alpha = \pm 1$). Then,
\[\alpha(P_1\otimes \cdots \otimes P_n)\kp = \kp\]
and so $\kp$ is a $\alpha$-eigenvector of $P_1\otimes \cdots \otimes P_n$.
Consequently, the expected value satisfies
\[\braket{\psi | P_1\otimes \cdots \otimes P_n}{\psi} = \alpha.\]
From footnote \ref{note1},
this means that, given $n$ qubits prepared on the state $\kp$,
any joint outcome $s \in \mathbb{Z}_2^n$ resulting from measuring $P_i$ at each qubit $i$ must 
have even (resp. odd) parity when $\alpha = 1$ (resp. $\alpha=-1$).

Therefore, to any stabiliser group $S\leq\PN$ we associate an XOR theory 
\[\XORTe{S}\coloneqq \{\varphi_P\mid P\in S\},\]
where
\[
\varphi_P \coloneqq \left(\bigoplus_{\substack{i\in\{1,\ldots,n\}\\ P_i \neq I}}\bar{P}_i \;=\; a\right)
\]
where $a\in\mathbb{Z}_2$ is such that the element $P$ has global phase $\alpha = (-1)^a$.
We say that $S$ is \textbf{AvN} if $\XORTe{S}$ is inconsistent.

Note that, from the discussion above, it follows that this is (a subset of) the XOR theory $\XORTe{e}$ of every empirical model $e$ obtained from local Pauli measurements on any $n$-qubit state stabilised by $S$, i.e. any $\kp \in V_S$.


Consequently, given an AvN subgroup $S\leq\PN$ and any state $\ket{\psi}\in V_S$,
the $n$-partite empirical model realised by $\ket{\psi}$ under the Pauli measurements is strongly contextual.
Indeed, the inconsistency of $\XORTe{S}$ implies the impossibility of finding a global assignment compatible with the support of the empirical model \cite{Abramsky2}:

\begin{proposition}\label{prop: paradox}
\emph{AvN} subgroups of $\PN$ give rise to strongly contextual empirical models admitting All-vs-Nothing arguments.
\end{proposition}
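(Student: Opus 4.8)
The plan is to derive the statement directly from the earlier proposition that every AvN empirical model is strongly contextual, combined with the observation recorded just above the statement: that $\XORTe{S}$ sits inside the XOR theory of any empirical model realised by a state in $V_S$. So the whole argument reduces to making that inclusion precise and then invoking monotonicity of inconsistency.

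First I would fix an AvN subgroup $S\leq\PN$ and an arbitrary state $\kp\in V_S$, and let $e$ denote the $n$-partite empirical model produced by local Pauli measurements on $\kp$ in the Bell-type scenario with $\MX=\bigcup_{i=1}^n\{x_i,y_i,z_i\}$. The heart of the matter is the inclusion $\XORTe{S}\subseteq\XORTe{e}$. To establish it, I take any $P=\alpha(P_i)_{i=1}^n\in S$ and consider the set of labels $U_P=\{m_i\mid P_i\neq I\}$, where $m_i\in\{x_i,y_i,z_i\}$ is the label matching $P_i$. Since $U_P$ selects at most one measurement per party, it is a genuine context of the scenario. The eigenvalue computation preceding the statement shows that $\kp$ is an $\alpha$-eigenvector of $P_1\otimes\cdots\otimes P_n$, and hence, reading off the parity convention of footnote~\ref{note1}, every joint outcome in the support of $e_{U_P}$ has parity $a$, where $\alpha=(-1)^a$. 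By the definition of $\XORTe{e}$ this is exactly the assertion $\varphi_P$, so $\varphi_P\in\XORTe{e}$, and letting $P$ range over $S$ gives the desired inclusion.

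With the inclusion in hand the conclusion is immediate: inconsistency of a theory is inherited by any theory containing it, so the assumed inconsistency of $\XORTe{S}$ forces $\XORTe{e}$ to be inconsistent, which is precisely the statement that $e$ is AvN. The cited proposition then yields that $e$ is strongly contextual, as required. I do not expect a genuine obstacle, since the result is essentially a repackaging of the eigenvalue observation; the one point requiring care is the bookkeeping in $\XORTe{S}\subseteq\XORTe{e}$, namely matching the abstract $\Ztwo$-variables $\bar{P}_i$ of the theory with the measurement labels $\bar{m}_i$ of the scenario, checking that the support $\{i\mid P_i\neq I\}$ genuinely indexes a valid context, and confirming uniformly over all $P\in S$ that the asserted parity agrees with the sign $\alpha$ of the eigenvalue.
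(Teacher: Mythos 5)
Your proposal is correct and follows essentially the same route as the paper: the paper's own argument is precisely the observation that $\XORTe{S}$ is a subset of $\XORTe{e}$ for any model $e$ realised by a state in $V_S$, so that inconsistency of $\XORTe{S}$ propagates to $\XORTe{e}$, making $e$ AvN and hence strongly contextual by the earlier proposition. Your write-up merely makes explicit the bookkeeping (the contexts $U_P$ and the parity matching via footnote~\ref{note1}) that the paper leaves implicit in the phrase ``from the discussion above.''
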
 

\subsection{Galois connections and relations with logic}\label{sec: Galois}
Proposition \ref{prop: paradox} shows that an AvN argument is essentially a logical paradox: to a strongly contextual model corresponds an inconsistent XOR theory. Previous work has already outlined this intriguing connection between contextuality and logical paradoxes. In \cite{Abramsky4} it is shown how quantum violations to  Bell  inequalities can be systematically viewed as arising from logical inconsistencies. In \cite{Abramsky2}, a structural equivalence between PR-box models and liar paradoxes is observed.
Here, the link between logical XOR paradoxes and the contextuality of stabiliser subgroups is formalised using a Galois connection.

There is a well-known correspondence between the rank of a subgroup $S$ of $\PN$ and the dimension of its stabiliser \cite{Caves}:
\begin{equation}\label{equ: rank}
\rank S=k\biff \dim V_S=2^{n-k}.
\end{equation}
In particular, when $S$ is a maximal stabiliser subgroup, we have $k=n$, hence $\dim V_S= 1$, so $S$ stabilises a unique state (up to global phase). We call such states \textbf{stabiliser states}.

We formalise and extend this link by introducing a Galois correspondence between subgroups of $\PN$ and their stabilisers.

Given two partially ordered sets $A$ and $B$, an \textbf{(antitone) Galois connection} between $A$ and $B$ is a pair $\langle f, g\rangle$ of order-reversing maps $f:A\rightarrow B, g:B\rightarrow A$ such that $a\leq gf(a)$ for all $a\in A$, and $b\leq fg(b)$ for all $b\in B$. 

Define a relation $R\subseteq \PN\times H_n$ by 
\[
gRv\biff g\action v=v.
\]
This induces a Galois connection between the powersets $\mathcal{P}(\PN)$ and $\mathcal{P}(H_n)$, ordered by inclusion:
\begin{equation}\label{equ: Galois relation}
\begin{matrix}
\mathcal{P}(\PN) & \longleftrightarrow & \mathcal{P}(H_n)\\
S & \longmapsto & S^\perp\coloneqq \{v\mid\forall g\in S. gRv\}\\
V^\perp\coloneqq \{g\mid\forall v\in V. gRv\} & \longmapsfrom & V.
\end{matrix}
\end{equation}
Note that \textbf{closed sets} $S^{\perp\perp}$ and $V^{\perp\perp}$ are subgroups of $\PN$
and vector subspaces of $H_n$, respectively. Therefore, by restricting \eqref{equ: Galois relation} to closed sets, we obtain a Galois connection $\SG(\PN)\leftrightarrow\SSet(H_n)$ between the closed subgroups of $\PN$ and the closed subspaces of $H_n$. Explicitly, the connection is given by the maps $F:\SG(\PN)\leftrightarrow \SSet(H_n):G$, with 
\[
F:: S\mapsto V_S \quad\quad \text{and} \quad\quad  G:: V  \mapsto  \bigcap_{\ket{\psi}\in V}(\PN)_{\ket{\psi}}, 
\]
where $(\PN)_{\ket{\psi}}\coloneqq \{A\in \PN\mid A\action \ket{\psi}=|\psi \rangle\}$ denotes the isotropy group of $\ket{\psi}$. Since the Galois connection is constructed through closed sets, it is \textbf{tight} in the sense of \cite{Raney}, and hence $F$ is an order-isomorphism, with inverse $G$. 


This result suggests an intriguing relation with the Galois connection between syntax and semantics in logic~\cite{Lawvere, Smith}
\begin{equation}\label{equ: logic}
\begin{matrix}
\text{$\mathcal{L}$-\textbf{Theories}} & \longleftrightarrow & \mathcal{P}(\text{$\mathcal{L}$-\textbf{Structures}})\\
\Gamma & \longmapsto & \{\mathcal{M}\mid\forall\varphi \in\Gamma. \mathcal{M}\models \varphi\}\\
\{\varphi\mid\forall\mathcal{M} \in M. \mathcal{M}\models\varphi\} & \longmapsfrom & M,

\end{matrix}
\end{equation}
where $\mathcal{L}$ is a formal language. The Galois closure operators correspond to deductive closure on the side of theories, and closure under the equivalence induced by the logic on the side of models.
Let $\Theories$ denote the set of all XOR theories. The map
\[
\XORT: \SG(\PN)\longrightarrow \Theories:: S\longmapsto \XORTe{S},
\]
is order-preserving, and allows us to establish a link between the Galois connection $\SG(\PN)\leftrightarrow \SSet(H_n)$ and the one described in \eqref{equ: logic}. In particular, we have the following commutative diagram

\begin{center}
\begin{tikzpicture}[auto]
\matrix (m) [matrix of math nodes, row sep=4em, column sep=4em, text height=1.5ex, text depth=0.25ex]
{
\SG(\PN) & \SSet(H_n)\\
\phantom{~~}\Theories\phantom{~~} & \mathcal{P}(\Structures)\\
};

\path[->, font=\scriptsize]
(m-1-1) edge[auto, bend left=10] node {$F$} (m-1-2)
edge[swap] node {$\XORT$} (m-2-1)
(m-1-2) edge node[auto] {$\XORM$} (m-2-2)
edge[auto, bend left=10] node {$G$} (m-1-1)
(m-2-1) edge[auto, bend left=10] node {} (m-2-2)
edge[auto, bend left=10] node {} (m-2-2)
(m-2-2) edge[auto, bend left=10] node {} (m-2-1);

\end{tikzpicture}
\end{center}
where $\Structures$ denotes the set of XOR-structures, and the order preserving function $\XORM$ maps a subspace $V$ of $H_n$ to the set 
\[
\XORMt{V}\coloneqq \left\{\mathcal{M}\mathrel{\Big|}\forall\varphi \in\XORTe{G(V)}. \mathcal{M}\models\varphi\right\}.
\]
%
Note that, in categorical terms, an antitone Galois connection is a dual adjunction between poset categories. One can easily show that the pair of functors $\tuple{\XORT, \XORM}$ is a monomorphism of adjunctions \cite{MacLane}.

\section{Characterising AvN arguments}\label{sec: AvN triple conjecture}

Using the general theory of AvN arguments for stabiliser states reviewed in the last section, we present a characterisation of AvN arguments based on the combinatorial concept of an AvN triple \cite{Abramsky2}. To achieve this result we will take advantage of the graph state formalism, which will also enable us to conclude that a tripartite GHZ state always underlies any AvN argument for stabiliser states. 

\subsection{AvN triples}\label{sec: AvN defn}

Since AvN subgroups give rise to strongly contextual empirical models, we are naturally interested in characterising this property. In \cite{Abramsky2}, this problem is addressed by introducing the notion of AvN triple. We rephrase the definition from \cite{Abramsky2} in slightly more general terms:

\begin{defn}[cf. {\cite[Definition 3]{Abramsky2}}]\label{defn: AvNtriple}
An \textbf{AvN triple} in $\PN$ is a triple $\tuple{e,f,g}$ of elements of $\PN$ with global phases $\pm 1$ that pairwise commute and that satisfy the following conditions:
\begin{enumerate}
\item\label{cond1} For each $i=1,\dots, n$, at least two of $e_i, f_i, g_i$ are equal.
\item\label{cond2} The number of $i$ such that $e_i=g_i\neq f_i$, all distinct from $I$, is odd. 
\end{enumerate}
\end{defn}

Note that the only difference with respect to the original definition from \cite{Abramsky2} is that we allow elements of an AvN triple to have global phase $-1$.

A key result from \cite{Abramsky2} is that AvN triples provide a sufficient condition for All-vs-Nothing proofs of strong contextuality. A similar argument shows that this is still true for the slightly more general notion of AvN triple. 

\begin{theorem}[cf. {\cite[Theorem 4]{Abramsky2}}]\label{thm:AvNtriple->Avn}
Any subgroup $S$ of $\PN$ containing an AvN triple is AvN.
\end{theorem}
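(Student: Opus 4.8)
The plan is to reproduce, in this general setting, the combinatorial heart of Mermin's original contradiction: to exhibit four assertions of $\XORTe{S}$ whose XOR is the false sentence $0 = 1$. Given an AvN triple $\tuple{e,f,g}$ contained in $S$, the fourth element is the group product $h \coloneqq efg$, taken in this fixed order. As $S$ is a subgroup, $h \in S$, so $\varphi_e,\varphi_f,\varphi_g,\varphi_h \in \XORTe{S}$. I would show that (i) the left-hand sides satisfy $\mathrm{LHS}(\varphi_h) = \mathrm{LHS}(\varphi_e)\oplus\mathrm{LHS}(\varphi_f)\oplus\mathrm{LHS}(\varphi_g)$ as $\Ztwo$-linear forms in the variables $\bar{P}_i$, while (ii) the right-hand constants satisfy $a_h = a_e \oplus a_f \oplus a_g \oplus 1$. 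XOR-ing the four equations then annihilates every variable on the left and leaves $0 = 1$, so $\XORTe{S}$ is inconsistent and $S$ is AvN.

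For (i) I would work site by site, exploiting Condition~\ref{cond1}. Fixing $i$, at least two of $e_i,f_i,g_i$ coincide, so only a few patterns occur. If all three equal $P$, the contribution to $\varphi_e\oplus\varphi_f\oplus\varphi_g$ is $\bar{P}_i\oplus\bar{P}_i\oplus\bar{P}_i=\bar{P}_i$ (nothing when $P=I$), matching $h_i=P^3=P$. If exactly two coincide: when the repeated value is $I$ the single non-trivial Pauli $Q$ survives on both sides; when the lone value is $I$ the two copies of $\bar{P}_i$ cancel and $h_i=P^2=I$ contributes nothing; and when all three are non-trivial, with two equal to $P$ and one equal to $Q\neq P$, the two copies of $\bar{P}_i$ cancel leaving $\bar{Q}_i$, which matches the Pauli part $Q$ of $h_i$ in each sub-case. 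Thus the surviving left-hand variable always equals $\overline{h_i}$, giving (i).

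For (ii) I would track the global phase. Writing $e_if_ig_i=\gamma_i\,h_i$ with $\gamma_i\in\{\pm1,\pm i\}$, the multiplication rule of $\PN$ gives $\alpha_h=\alpha_e\alpha_f\alpha_g\prod_i\gamma_i$. The same case analysis yields $\gamma_i=+1$ in every configuration except one: when $e_i=g_i=P$ and $f_i=Q$ with $P\neq Q$ both non-trivial, the middle factor is sandwiched and $e_if_ig_i=PQP=-Q$, so $\gamma_i=-1$. This is precisely the configuration tallied by Condition~\ref{cond2}; since that count is odd, $\prod_i\gamma_i=-1$ and $\alpha_h=-\alpha_e\alpha_f\alpha_g$, i.e. $a_h=a_e\oplus a_f\oplus a_g\oplus1$, which is (ii). (Condition~\ref{cond1} already forces each $\gamma_i\in\{\pm1\}$, so $h$ has real global phase and $\varphi_h$ is genuinely an element of $\XORTe{S}$; pairwise commutativity is what situates $e,f,g$, and hence $h$, inside a common stabiliser subgroup.)

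I expect the main obstacle to be the phase bookkeeping of (ii), and in particular seeing why the ordering $h=efg$ --- equivalently, the distinguished role of $f$ in Condition~\ref{cond2} --- is forced. The crux is the asymmetry $PQP=-Q$ against $QPP=PPQ=Q$: a sign appears exactly when the odd-one-out sits in the middle slot, so Condition~\ref{cond2} counts precisely the sites contributing $-1$; everything else is routine matching. As a sanity check, running the argument on the GHZ stabilisers $e=Y_1X_2Y_3$, $f=X_1X_2X_3$, $g=Y_1Y_2X_3$ recovers Mermin's four equations and the contradiction $0=1$.
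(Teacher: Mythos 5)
Your proposal is correct and follows essentially the same route as the paper's proof: form the fourth element $h = efg$, use Condition~\ref{cond1} to show every variable appears an even number of times across the four equations of $\XORTe{S}$, and use Condition~\ref{cond2} to show the sign $(-1)^{N_f}$ forces the right-hand sides to sum to $1$, yielding $0=1$. Your site-by-site case analysis and explicit phase bookkeeping ($PQP=-Q$ exactly at the sites counted by Condition~\ref{cond2}) merely spell out in more detail what the paper compresses into the phase formula $(-1)^{a+b+c+N_f}$ and the ``$2$ $P$s and $2$ $Q$s per column'' observation.
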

\begin{proof}
Let $\tuple{e,f,g}$ be an AvN triple in $\PN$, where $e=(-1)^a (e_i)_{i=1}^n$, $f=(-1)^b(f_i)_{i=1}^n$ and $g=(-1)^c (g_i)_{i=1}^n$, with $a,b,c \in\mathbb{Z}_2$.  
We denote by $N_f$ the number of $i$'s such that $e_i=g_i\neq f_i$ and $e_i,f_i,g_i\neq I$, which is odd by \ref{cond2}. The global phase of $h\coloneqq efg$ is given by 
\[
(-1)^{a+b+c+N_f}=(-1)^{a\oplus b\oplus c\oplus 1}.
\]
Thus, if we consider the four equations corresponding to these elements in the XOR theory of the subgroup, summing their right-hand sides yields $a\oplus b\oplus c\oplus (a\oplus b\oplus c \oplus 1)=1$. On the other hand, by \ref{cond1}, we have $\{e_i,f_i,g_i\}=\{P,Q\}$ with at least two elements equal to $P$. Thus, by \eqref{equ: Pauli's}, the product $e_if_ig_i$ will be $Q$ up to a phase, and so, as this is absorbed into the global phase, $h_i=Q$. This means that each column of the four equations contains $2$ $P$s and $2$ $Q$s. Therefore, summing all the four equations we obtain $0=1$.  
\end{proof}

Remarkably, every AvN argument which has appeared in the literature can be seen to come down to exhibiting a AvN triple \cite{Abramsky2}. It was conjectured by the first author that the presence of an AvN triple in a stabiliser subgroup is also necessary for the existence of an All-vs-Nothing proof of strong contextuality. This  is the \textbf{AvN triple conjecture} \cite{AvN}. We will now prove the AvN triple conjecture for the case of maximal stabiliser subgroups, or equivalently, for stabiliser states, by taking advantage of the graph state formalism, which is briefly reviewed in the following subsection.

\subsection{Graph states}
Graph states are special types of multi-qubit states that can be represented by a graph.
Let $G=(V,E)$ be an undirected graph.
For each $u \in V$, consider the element $g^{u} = (g^{u}_v)_{v \in V} \in \Pauli_{|V|}$ with global phase $+1$ and
components
\begin{equation}\label{equ: graph state comps}
g^{(u)}_v =
\begin{cases}
X & \text{if $v = u$}\\
Z & \text{if $v \in \mathcal{N}(u)$}\\
I & \text{otherwise,}
\end{cases} 
\end{equation}
where $\mathcal{N}(u)$ denotes the neighborhood of $u$,
i.e. the set of all the vertices adjacent to $u$.
The \textbf{graph state} $\ket{G}$ associated to $G$ is the unique\footnote{
Uniqueness follows from \eqref{equ: rank} since there is an independent generator for each vertex.}
state stabilised by the subgroup generated by these elements,
\begin{equation}\label{equ: graph state}
S_G = \left\langle \{g^{(u)} \mid u \in V\}\right\rangle.
\end{equation}

One of the key properties of graph states is their generality with respect to stabiliser states, as stated in the following theorem due to Schlingemann \cite{Schlingemann}.
Recall the definition of the local Clifford (LC) group on $n$ qubits
\[
\Clifford_1^n\coloneqq \{U\in\textbf{U}(2)^n\mid U\Pauli_nU^\dagger=\Pauli_n\},
\]
where $U$ acts by conjugation on elements of $\PN$ via the representation of $\alpha(P_i)_{i=1}^n$ as the operator $\alpha P_1\otimes\cdots\otimes P_n$. Two states $\kp,\ket{\psi'}\in H_n$ are said to be LC-equivalent whenever there is a $U\in\Clifford_1^n$ such that $\ket{\psi'}=U\kp$.
\begin{theorem}[{\cite{Schlingemann}}]\label{thm: LU}
Any stabiliser state $\ket{S}$ is LC-equivalent to some graph state $\ket{G}$, i.e. $\ket{S}=U\ket{G}$ for some LC unitary $U\in\Clifford_1^{|V|}$. 
\end{theorem}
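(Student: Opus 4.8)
The plan is to recast the statement in the binary symplectic (``check matrix'') formalism, where it becomes a normal-form problem over $\mathbb{Z}_2$. A stabiliser state $\ket{S}$ corresponds to a maximal stabiliser subgroup $S\leq\PN$, which by \eqref{equ: rank} has rank $n$. Choosing $n$ independent generators and recording, for each generator and each qubit, its $X$-support and its $Z$-support (setting phases aside for the moment), we obtain an $n\times 2n$ matrix $(A \mid B)$ over $\mathbb{Z}_2$ of full rank $n$. Pairwise commutativity of the generators is exactly the isotropy condition $AB^{T}+BA^{T}=0$, so the rows span a Lagrangian subspace of the symplectic space $\mathbb{Z}_2^{2n}$. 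In this picture a local Clifford acts qubit-by-qubit: modulo phases, $\Clifford_1^n$ realises on each qubit the full group $\mathrm{SL}(2,\mathbb{Z}_2)\cong S_3$ permuting $X,Y,Z$, generated by the Hadamard $H$ (which swaps the $X$- and $Z$-columns of that qubit) and the phase gate $S$ (which fixes $Z$ and sends $X\mapsto Y$, i.e.\ adds the qubit's $X$-column into its $Z$-column). Replacing the generators by another $\mathbb{Z}_2$-basis of $S$ is a row operation, and leaves the group, hence the state, unchanged.

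The target normal form is $(I \mid \Gamma)$ with $\Gamma$ symmetric and zero on the diagonal: this is precisely the check matrix of the generators $g^{(u)}$ of \eqref{equ: graph state comps} for the graph whose adjacency matrix is $\Gamma$. So it suffices to bring an arbitrary $(A \mid B)$ to this form using local Cliffords and row operations, and then to correct the phases. I would proceed in four steps. First, make the $X$-block $A$ invertible by local Cliffords (on each qubit the $S_3$-action lets the new $X$-column be any nonzero element of $\mathrm{span}\{a_i,b_i\}$). Second, left-multiply by $A^{-1}$, a change of generators, reaching $(I \mid M)$; substituting $A=I$ into the isotropy condition gives $M=M^{T}$, so $M$ is symmetric. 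Third, clear the diagonal: a nonzero entry $M_{ii}$ signals a $Y$ on qubit $i$, and applying $S$ on qubit $i$ adds the $i$-th column $e_i$ of the now-identity $X$-block into the $i$-th column of $M$, flipping $M_{ii}$ to $0$ while changing no other entry, so symmetry is preserved and we obtain a genuine adjacency matrix $\Gamma$.

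Fourth, the transformed stabiliser now has the same $X/Z$ pattern as $S_G$ from \eqref{equ: graph state}, but possibly with some $-1$ phases. Since local Paulis are themselves (local) Cliffords and conjugation by $Z_i$ or $X_i$ flips the relevant signs, a final local-Pauli correction matches all phases to $+1$ and lands exactly on the generators $\{g^{(u)}\}$, i.e.\ on $\ket{G}$. Composing the Hadamards, phase gates and Pauli corrections of all the steps yields a single $U\in\Clifford_1^{|V|}$ with $\ket{S}=U\ket{G}$, as required.

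The only substantial step is the first. Phrased invariantly, it asks that local Cliffords can always make $A$ invertible, i.e.\ that for each qubit one can select a nonzero column $c_i\in\{a_i,b_i,a_i+b_i\}$ so that $c_1,\dots,c_n$ are linearly independent; equivalently, that the Lagrangian $L$ projects isomorphically onto a suitable coordinate copy of $\mathbb{Z}_2^{n}$. I expect this transversal lemma to be the main obstacle, and I would establish it by a pivoting argument on $(A\mid B)$: process the qubits one at a time, maintaining an invertible partial $X$-block, and use the \emph{maximality} of the isotropic subspace to rule out the degenerate situation in which both $a_i$ and $b_i$ (hence all of $\mathrm{span}\{a_i,b_i\}$) already lie in the span of the columns chosen so far. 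This is the combinatorial heart of the argument, and is precisely the point at which Schlingemann's theorem rests on the theory of binary symplectic spaces; the remaining steps are routine normal-form manipulations over $\mathbb{Z}_2$.
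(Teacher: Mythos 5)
The paper itself offers no proof of Theorem~\ref{thm: LU}: it is imported wholesale from Schlingemann \cite{Schlingemann}. So your proposal can only be measured against the standard literature argument, which it essentially reconstructs: pass to check matrices, observe that graph states are exactly the stabiliser states with check matrix $(I \mid \Gamma)$ for $\Gamma$ symmetric with zero diagonal, and reduce a general $(A \mid B)$ to that form by local Cliffords and changes of generators. Your steps two to four are correct and genuinely routine: left multiplication by $A^{-1}$ is a legitimate change of generating set; isotropy then forces $M = A^{-1}B$ to be symmetric; the phase gate on qubit $i$ adds $e_i$ to the $i$-th column of $M$, clearing the diagonal without disturbing anything else; and the final sign correction works because, once the $X$-block is the identity, generator $u$ is the only one carrying $X$ (rather than $I$ or $Z$) on qubit $u$, so conjugation by $Z_u$ flips its sign and no other.

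The genuine gap is the one you flag yourself: the transversal lemma of step one is asserted, not proved, and it carries all the mathematical content. Saying that ``maximality rules out the degenerate situation'' is the right instinct but not an argument; a priori, greedy column choices could paint themselves into a corner even though some other system of choices succeeds, so the claimed invariant needs a proof. The claim is in fact true exactly as you state it, and one can close it as follows. Suppose $c_1,\dots,c_k$ have been chosen and both $a_{k+1}$ and $b_{k+1}$ lie in $W = \mathrm{span}\{c_1,\dots,c_k\}$. Local Cliffords and a change of generators reduce to the case $c_j = a_j = e_j$ for $j \le k$, so that $a_{k+1}, b_{k+1}$ are supported on coordinates $1,\dots,k$. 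Then the last $n-k$ generators have zero $X$-part on qubits $1,\dots,k+1$ and zero $Z$-part on qubit $k+1$: they act as $I$ or $Z$ on qubits $1,\dots,k$ and as $I$ on qubit $k+1$. Their restrictions to the remaining $n-k-1$ qubits still pairwise commute (their components on the first $k+1$ qubits commute trivially), and these restrictions are independent: a nontrivial product of them equal to the identity on those qubits would be an element of the group, not equal to the identity (the $n$ generators are independent), consisting of $Z$'s on a nonempty subset $U \subseteq \{1,\dots,k\}$; such an element anticommutes with generator $i_0$ for any $i_0 \in U$ (that generator has $X$ or $Y$ on qubit $i_0$, while the element has $Z$ there and they overlap nowhere else), contradicting commutativity of the group. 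This yields $n-k$ independent pairwise-commuting elements of $\Pauli_{n-k-1}$, i.e.\ an isotropic subspace of dimension $n-k$ inside $\mathbb{Z}_2^{2(n-k-1)}$, which is impossible --- and this dimension count is precisely where maximality (having $n$ independent generators) enters. Alternatively, you can avoid the greedy process entirely, as the literature does: row-reduce $A$ and let $P$ be its set of pivot columns; isotropy together with full rank of $(A \mid B)$ forces the $Z$-parts of the zero-$X$ rows, restricted to the qubits outside $P$, to form an invertible matrix, so applying Hadamards on exactly those qubits makes the $X$-block invertible in one step.
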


An instance of this result  will be important for us. 
Consider the $n$-partite GHZ state
\[
\GHZn=\frac{1}{\sqrt{2}}(\ket{0}^{\otimes n}+\ket{1}^{\otimes n}).
\]
We apply a local Clifford transformation consisting of a Hadamard unitary
\[
H\coloneqq \frac{1}{\sqrt{2}}\left(
\begin{matrix}
1 & 1\\
1 & -1
\end{matrix}
\right)\]
at every qubit of GHZ except the $k$-th one (where $1\leq k\leq n$ can be chosen arbitrarily)
to obtain
\[
\ket{\psi}=\frac{1}{\sqrt{2}}\left(\ket{+\cdots +0_k+\cdots +}\quad+\quad \ket{-\cdots -1_k-\cdots -}\right).
\]
%
%
%
%
%
The stabiliser group of $\ket{\psi}$ is generated by the elements $e, f^1,f^2,\dots, f^{k-1},f^{k+1},\dots, f^n$, where 
\[
e_i=
\begin{cases}
X & \text{if $i = k$}\\
Z & \text{if $i \neq k$}
\end{cases}
\quad \text{and}\quad 
f^j_i=
\begin{cases}
X & \text{if $i = j$}\\
Z & \text{if $i = k$}\\
I & \text{otherwise}
\end{cases}
\]
%
%
%
%
%
%
%
%
%
%
%
%
%
%
%
%
%
By the definition of a graph state, this list of stabilisers coincides with the \emph{star graph} centered at the vertex corresponding to the $k$-th qubit (Figure~\ref{fig: star}). Since $k$ was chosen arbitrarily, all the graph states corresponding to star graphs on $n$ vertices and different centers are LC-equivalent. 
\begin{figure}[htbp]
\centering
\includegraphics[scale=0.65]{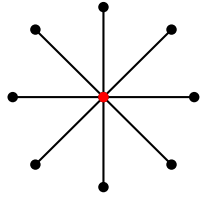}
\caption{Example of a star graph $G=(V,E)$ with $|V|=9$. The graph state $\ket{G}$ is LU-equivalent to the $9$-partite GHZ state.}\label{fig: star}
\end{figure}

Another important property of graph states is that they allow us to characterise LC-equivalence between them by a simple operation on the underlying graphs. This is the notion of local complementation. Given a graph $G=(V,E)$ and a vertex $v\in V$, the \textbf{local complement of $G$ at $v$}, denoted by $G \star v$, is obtained by complementing the subgraph of $G$ induced by the neighborhood $\mathcal{N}(v)$ of $v$ and leaving the rest of the graph unchanged \cite{Bouchet1993}. The following theorem is due to Van den Nest, Dehaene, and De Moor \cite{VandenNest} (see also \cite{Hein2004}).

\begin{theorem}[{\cite[Theorem 3]{VandenNest}}]\label{thm: local complementation}
By local complementation of a graph $G=(V,E)$ at some vertex $v\in V$ one obtains an LC-equivalent graph state.
Moreover, two graph states $\ket{G}$ and $\ket{G'}$ are LC-equivalent if and only if the corresponding graphs are related by a sequence of local complementations, i.e.  
\[
G'= G \star v_1 \cdots \star v_n
\]
for some $v_1,\dots v_n\in V$. 
\end{theorem}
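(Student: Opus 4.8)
The statement splits into a constructive first half and a substantive converse, which I would handle by quite different means. For the first half---that $\ket{G}$ and $\ket{G\star v}$ are LC-equivalent---my plan is to exhibit the implementing local Clifford unitary explicitly. Setting
\[
U_v \;=\; \exp\bigl(-i\tfrac{\pi}{4}X_v\bigr)\prod_{w\in\mathcal{N}(v)}\exp\bigl(i\tfrac{\pi}{4}Z_w\bigr),
\]
a product of single-qubit $\pi/2$-rotations (up to convention-dependent signs), one sees by inspection that $U_v\in\Clifford_1^{|V|}$. Each factor acts by conjugation as a sign-decorated permutation of the Pauli operators on its qubit, and I would verify that conjugating the generators $g^{(u)}$ of $S_G$ by $U_v$ yields---after absorbing signs into global phase and reorganising the generating set---exactly the generators of $S_{G\star v}$. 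Only the columns indexed by $v$ and its neighbours are affected, and the induced edge-toggling inside $\mathcal{N}(v)$ is precisely local complementation; this is a finite, if slightly tedious, computation.

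The converse is the real content, and I would pass to the binary symplectic representation of the stabiliser formalism. There a Pauli operator modulo phase is a vector in $\Ztwo^{2n}$, a stabiliser state corresponds to a maximal isotropic (Lagrangian) subspace for the symplectic form, and a graph state has check matrix $[\,I\mid\Gamma\,]$ with $\Gamma$ the symmetric, zero-diagonal adjacency matrix of $G$; row operations change generators without changing the group. \emph{Crucially}, for stabiliser states LC-equivalence of the states coincides with LC-conjugacy of their stabiliser groups, so the global phase is irrelevant and I may argue entirely at the level of groups. Modulo local Paulis, the local Clifford group then acts as the block-diagonal subgroup of $\mathrm{Sp}(2n,\Ztwo)$ whose blocks lie in $\mathrm{SL}(2,\Ztwo)\cong S_3$, one per qubit. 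The theorem therefore reduces to the combinatorial claim that the orbit of $[\,I\mid\Gamma\,]$ under this local symplectic group, intersected with the matrices reducible to graph form, is exactly the local-complementation orbit of $\Gamma$.

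The hard part is this last reduction, and this is where I expect the main obstacle to lie. The plan is to analyse the $\mathrm{SL}(2,\Ztwo)^{|V|}$ action one qubit at a time: a single local transformation can carry the generator matrix off graph form, after which one must Gauss-eliminate to restore an $[\,I\mid\Gamma'\,]$ shape, and the feasibility of this restoration is exactly the condition that the image again be a graph state. The key lemma to establish is that whenever the restoration succeeds, $\Gamma'$ differs from $\Gamma$ by a single local complementation (or a short, bounded composition of them), and conversely that every local complementation arises in this way---so that no LC-equivalence between graph states can escape the local-complementation orbit. The bulk of the work is a case analysis over the six elements of $\mathrm{SL}(2,\Ztwo)$ at the affected qubit, tracking how elimination redistributes the entries of $\Gamma$, while checking throughout that the isotropic condition and the sign data (fixable by local Paulis, which leave the graph unchanged) are preserved. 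The delicate point is proving that re-reducibility to graph form holds \emph{precisely} when the target is a graph state; this is essentially Bouchet's theorem on local equivalence of isotropic systems transported into the quantum setting, and I would expect to lean on that structure to close the argument.
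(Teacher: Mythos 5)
First, a point of comparison: the paper does not prove this theorem at all --- it is quoted from Van den Nest, Dehaene and De Moor \cite{VandenNest} --- so your attempt can only be measured against the proof in that reference. Your first half matches it: exhibiting the local Clifford unitary $U_v = e^{-i\pi X_v/4}\prod_{w\in \mathcal{N}(v)} e^{i\pi Z_w/4}$ and checking that conjugation sends the generators $g^{(u)}$ of $S_G$ to a generating set of $S_{G\star v}$ is exactly the standard argument, and that part of your plan is complete up to routine computation. Your framework for the converse (check matrices, the local Clifford group modulo Paulis acting as block-diagonal elements of $\mathrm{Sp}(2n,\mathbb{Z}_2)$ with blocks in $\mathrm{GL}(2,\mathbb{Z}_2)$) is also the correct one and is the same setting used in the cited proof.

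The gap is in how you propose to close the converse. Your plan is to factor the block-diagonal symplectic transformation into single-qubit pieces, apply them one at a time, Gauss-eliminate back to graph form $[\,I\mid\Gamma'\,]$ after each piece, and establish a key lemma that every successful restoration changes $\Gamma$ by one (or a bounded number of) local complementations. The structural problem is that the intermediate objects need not be graph states at all: after applying only one single-qubit factor of a genuine LC-equivalence, the matrix that must be inverted to restore graph form is typically singular, so there is no intermediate $\Gamma'$ and the qubit-by-qubit induction has nothing to induct on. The proof of Van den Nest et al.\ does not proceed qubit-by-qubit for precisely this reason; it works with the global condition $\theta'(\theta B + D) = \theta A + C$ (where $A,B,C,D$ are the diagonal matrices collecting the block entries and $\theta,\theta'$ are the adjacency matrices) and runs an induction on the support of $B$, using local complementations applied to $\theta$ to shrink that support --- a genuinely different and more delicate induction than the one you sketch. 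Your fallback, that the key lemma ``is essentially Bouchet's theorem on isotropic systems,'' is not a repair: invoking Bouchet replaces the theorem by an equally deep external result (this is a legitimate alternative derivation, but then the entire content of your argument reduces to the translation between isotropic systems and stabiliser states, which you also do not carry out). As it stands, the key lemma at the heart of your plan is exactly the statement to be proven, so what you have is a correct framework plus an honest localisation of the difficulty, not a proof.
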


Thanks to this theorem we can show that the $n$-partite GHZ state is LC-equivalent both to the state corresponding to the star graph (as shown above), and the one corresponding to the complete graph on $n$ vertices.
Indeed, it is sufficient to choose a vertex $v$ of the complete graph and apply a local complementation to it to obtain a star graph centered at $v$, as illustrated in Figure~\ref{fig: complete}.

\begin{figure}[htbp]
\centering
\includegraphics[scale=0.65]{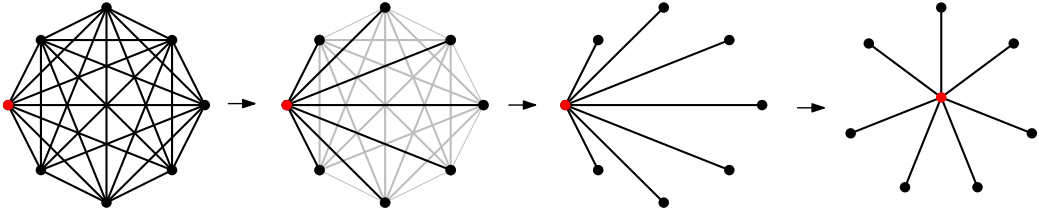}
\caption{The star graph centered at a vertex $v$ (marked in red) can be obtained from the complete graph by local complementation at the vertex $v$.}\label{fig: complete}
\end{figure}

\subsection{The AvN triple theorem and its consequences}
In this section, we prove the theorem characterising AvN arguments on stabiliser states.

Firstly, we need to make some observations.
Note that the Born rule is invariant under any unitary action acting simultaneously on the measurement by conjugation and on the state.
Therefore,
if we have a quantum realisable empirical model specified by a state $\kp$ and a set of measurements $\MX$,
then given any unitary $U$, the empirical model specified by the state $U\kp$ and the set of measurements $U \MX U^\dagger = \{ U A U^\dagger \mid A \in \MX\}$ is equivalent to the original one, in the sense that it assigns the same probabilities,
which of course implies that it has the same contextuality properties.

In the particular case when $\kp$ is a stabiliser state for the subgroup $S \leq \PN$,
and $U$ is a LC-operation, then 
the state $U \kp$ is a stabiliser state for the subgroup
$USU^\dagger=\{UPU^\dagger\mid P \in S\}$.

An important fact we shall need is that AvN triples are sent to AvN triples by such LC operations.
The reason is that LC operations are composed of local unitaries that act as  permutations on the set $\{\pm X, \pm Y, \pm Z\}$, and therefore preserve all the conditions of Definition~\ref{defn: AvNtriple}.

We now show that AvN triples fully characterise All-vs-Nothing arguments for stabiliser states, and that a tripartite GHZ state is always responsible for the existence of such an AvN proof of strong contextuality. 

\begin{theorem}[AvN Triple Theorem]\label{thm: AvN triple conjecture}
A maximal stabiliser subgroup $S$ of $\PN$ is AvN if and only if it contains an AvN triple. The AvN argument can be reduced to one concerning only three qubits. The state induced by the subgraph for these three qubits is LC-equivalent to a tripartite GHZ state. 
\end{theorem}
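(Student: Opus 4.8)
The plan is to prove the two directions separately, with essentially all the content lying in necessity. Sufficiency — that a maximal (indeed, any) subgroup containing an AvN triple is AvN — is precisely Theorem~\ref{thm:AvNtriple->Avn}, so nothing further is needed there. For necessity I would first reduce to graph states: by Schlingemann's Theorem~\ref{thm: LU} the stabiliser state $\ket{S}$ equals $U\ket{G}$ for some $U\in\Clifford_1^{n}$ and some graph $G$, so $S=U S_G U^\dagger$. Since LC-conjugation sends AvN subgroups to AvN subgroups and AvN triples to AvN triples (as observed before the statement, each local Clifford acts as a signed permutation of $\{X,Y,Z\}$ at its site and hence preserves Conditions~\ref{cond1}--\ref{cond2}), it suffices to exhibit an AvN triple inside $S_G$ whenever $S_G$ is AvN, and then transport it back through $U$.

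The crux is therefore a graph-theoretic reformulation: I claim $S_G$ is AvN if and only if $G$ has a connected component with at least three vertices. For the easy half, suppose every component has at most two vertices, so $G$ is a disjoint union of isolated vertices and single edges. Then $S_G$ is the internal direct product of the component stabilisers acting on disjoint sets of qubits, and since operators on disjoint tensor factors commute on the nose, every element's global phase is the product of the component phases and its equation in $\XORTe{S_G}$ is the sum of the corresponding component equations. A satisfying assignment can thus be built componentwise — an isolated vertex contributes only $\bar{x}_v=0$, and a single edge $\{u,v\}$ contributes three equations in six variables, which are trivially too few to be inconsistent — so $\XORTe{S_G}$ is consistent and $S_G$ is not AvN. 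Hence if $S_G$ is AvN, some component has at least three vertices.

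For the other half I would pass from a large component to an explicit triangle and read off the triple. Connected components are invariant under local complementation (it only toggles edges inside a single neighbourhood), and a connected graph on at least three vertices always has a vertex $v$ of degree at least two; if two neighbours of $v$ are already adjacent we have a triangle, and otherwise local complementation at $v$ creates one, which by Theorem~\ref{thm: local complementation} is LC-equivalent to $G$. Working in this graph $G'$, let $i,j,k$ be the vertices of a triangle and set $e=g^{(i)}$, $f=g^{(j)}$, $g=g^{(k)}$. These pairwise commute, lying in the abelian group $S_{G'}$, and I would verify Definition~\ref{defn: AvNtriple} directly. At each of $i,j,k$ the component triple is a permutation of $(X,Z,Z)$, so two entries coincide; at every other vertex all three components lie in $\{Z,I\}$ and two coincide by pigeonhole, giving Condition~\ref{cond1}. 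For Condition~\ref{cond2}, the only vertex at which $e_v=g_v\neq f_v$ with all three entries distinct from $I$ is the vertex carrying the $X$ of $f$, namely $j$ (a vertex outside $\{i,j,k\}$ with all three entries non-identity would force $e_v=f_v=g_v=Z$), so the count is exactly one, which is odd. Thus $\langle g^{(i)},g^{(j)},g^{(k)}\rangle$ is an AvN triple in $S_{G'}$, and transporting it back through the local complementations and through $U$ produces the required triple in $S$. The reduction to three qubits and the GHZ claim then come for free: the triple is controlled by $i,j,k$, whose induced subgraph $G'[\{i,j,k\}]$ is the complete graph on three vertices, which as shown before the statement is LC-equivalent to $\GHZa{3}$.

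The step I expect to demand the most care is the necessity direction as a whole: identifying that genuine tripartite entanglement — a component of size at least three, equivalently a triangle somewhere in the local-complementation orbit — is the \emph{precise} obstruction, and threading Theorems~\ref{thm: LU} and~\ref{thm: local complementation} together while checking that the AvN-triple conditions (especially the parity count of Condition~\ref{cond2}) survive on the full, not merely the induced, graph. The only genuinely computational point is confirming the global phases of the relevant stabiliser products, but once the product-over-components structure is in hand this reduces to a handful of applications of the Pauli relations~\eqref{equ: Pauli's}.
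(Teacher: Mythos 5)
Your proposal is correct, and its skeleton --- sufficiency from Theorem~\ref{thm:AvNtriple->Avn}; reduction to a graph state via Theorem~\ref{thm: LU} together with the LC-invariance of AvN triples; then locating a vertex of degree at least two --- is the same as the paper's, but you resolve the two substantive steps by genuinely different means. First, to exclude graphs all of whose components have at most two vertices, the paper cites the known fact that $1$- and $2$-qubit states are never strongly contextual, whereas you show directly that $\XORTe{S_G}$ is consistent by building a satisfying assignment componentwise; this is more elementary and self-contained (indeed every element of such an $S_G$ has global phase $+1$, so the all-zero assignment already satisfies the whole theory), though your justification ``three equations in six variables, too few to be inconsistent'' is not a valid principle by itself --- the correct observation is that the edge equations have pairwise disjoint variable sets and zero right-hand sides. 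Second, when the chosen vertex has two non-adjacent neighbours, the paper stays inside the original graph and exhibits the triple $\tuple{g^u,\, g^ug^v,\, g^ug^w}$ explicitly, whereas you perform a local complementation at that vertex to create a triangle and transport the resulting generator triple $\tuple{g^{(i)},g^{(j)},g^{(k)}}$ back through the LC unitary, invoking Theorem~\ref{thm: local complementation}. Your route buys uniformity --- a single triple template, with Definition~\ref{defn: AvNtriple} verified once --- at the price of an extra appeal to the local complementation theorem and an extra conjugation; the paper's route is more economical and produces concrete AvN triples inside the original stabiliser group, which is precisely what feeds its later examples and the triple-generation algorithm, at the price of a second case with a slightly less symmetric triple. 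Both arguments identify the three relevant qubits as inducing either a triangle or a star, each LC-equivalent to the tripartite GHZ state, so the final clauses of the theorem are obtained in the same way.
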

\begin{proof}
Sufficiency follows from Proposition~\ref{thm:AvNtriple->Avn}.
So, suppose that the maximal stabiliser subgroup $S$ is AvN.
Let $\ket{\psi}$ be the stabiliser state corresponding to $S$.
Since any stabiliser state is LC-equivalent to a graph state by 
Theorem \ref{thm: LU}, and since $LC$ transformations preserve AvN triples,
we can suppose without loss of generality that $\ket{\psi}$ is a graph state $\ket{G}$ induced by a graph $G=(V,E)$, and consequently that $S = S_G$ as in \eqref{equ: graph state}.
.

Given that the empirical model obtained from the state $\ket{G}$ and local Pauli operators is strongly contextual, there must exist at least one vertex $u$ with degree at least $2$, \ie $|\mathcal{N}(u)| \geq 2$.
Indeed, if $G$ has no such vertex, $G$ is a union of disconnected edges and vertices, which implies that $\ket{G}$ is a tensor product of $1$-qubit and $2$-qubit states, which do not present strongly contextual behavior for any choice of local measurements \cite{BrassardMethotTapp2005:MinimumEntangledStatePseudoTelepathy,ABCDKM}. 

Let $u\in V$ have degree $\ge 2$ and let $v,w$ be two distinct vertices in $\mathcal{N}(u)$. We have two possible cases:
\begin{enumerate}
\item There is an edge between $v$ and $w$.
Then, in accordance with \eqref{equ: graph state comps}, the elements $g^u,g^v,g^w$ of $S_G$
have the form:\footnote{The notation in \eqref{equ: AvN1} indicates that $g^u_u = X$, $g^u_v = g^u_w = Z$, and $g^u_z$ is either $Z$ or $I$ for every other vertex $z \in V \setminus \{u,v,w\}$, and analogously for the other lines.}
\begin{equation}\label{equ: AvN1}
\begin{array}{rcccl}
g^u : &  X_u & Z_v & Z_w & [\text{$I$ or $Z$ on all other qubits}]\\
g^v : &  Z_u & X_v & Z_w & [\text{$I$ or $Z$ on all other qubits}]\\
g^w : &  Z_u & Z_v & X_w & [\text{$I$ or $Z$ on all other qubits}],\\
\end{array}
\end{equation}
which are easily seen to constitute an AvN triple. 

\item There is no edge between $v$ and $w$. Then, we have
\[
\begin{array}{rcccl}
g^u : &  X_u & Z_v & Z_w & [\text{$I$ or $Z$ on all other qubits}]\\
g^v : &  Z_u & X_v & I_w & [\text{$I$ or $Z$ on all other qubits}]\\
g^w : &  Z_u & I_v & X_w & [\text{$I$ or $Z$ on all other qubits}]\\
\end{array}
\]
and the elements $\langle  g^u, g^ug^v, g^ug^w\rangle$ form an AvN triple:
\begin{equation*}\label{equ: AvN2}
\begin{array}{rcccl}
g^u    : &  X_u & Z_v & Z_w & [\text{$I$ or $Z$ on all other qubits}]\\
g^ug^v : &  Y_u & Y_v & Z_w & [\text{$I$ or $Z$ on all other qubits}]\\
g^ug^w : &  Y_u & Z_v & Y_w & [\text{$I$ or $Z$ on all other qubits}].\\
\end{array}
\end{equation*}
\end{enumerate}
Notice that in both cases the AvN argument is reduced to just three qubits. Moreover, by the discussion at the end of the previous subsection, we know that the state corresponding to the subgraph induced by $u,v,w$ in either of these two cases is LC-equivalent to a tripartite GHZ state: in Case 1 we have a complete graph on three vertices, while in Case 2 we have a star graph centered at $u$. 
\end{proof}

The second part of the above result means that the essence of the contradiction is witnessed by looking at only three qubits. In fact, in the contexts 
being considered, the experimenters at the remaining $n-3$ parties either perform no measurement or a $Z$ measurement.
We could imagine that, in trying to build a consistent global assignment of outcomes in $\mathbb{Z}_2$ to all the measurements,
each of these $n-3$ parties $i$ is allowed to  freely choose a value $0$ or $1$ for the variable $\bar{Z}_i$. Then, the equations for the variables representing the measurements of the remaining three parties would 
be those of the usual GHZ argument, up to flipping an even number of the values on the right-hand side. In terms of the state, we can use the  ``partial inner product'' operation described e.g.~in \cite[p.~129]{QPSI}\footnote{This is actually  the application of a linear map to a vector under Map-State duality \cite{abramsky2008categorical}.} to apply the eigenvectors corresponding to the chosen values for the other $n-3$ parties to $\ket{G}$, resulting in a three-qubit pure state which is LC-equivalent to the GHZ state.

From this theorem, we immediately obtain the following corollaries:
\begin{corollary}
A graph state $|G\rangle$ is strongly contextual if and only if $G$ has a vertex of degree at least $2$.
\end{corollary}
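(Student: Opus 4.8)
The plan is to derive this corollary directly from the AvN Triple Theorem (Theorem~\ref{thm: AvN triple conjecture}) together with the structural observation already used in its proof. The key point is that a graph state $\ket{G}$ is itself a stabiliser state whose stabiliser subgroup $S_G$ is maximal, by the footnote following \eqref{equ: graph state} (there is one independent generator per vertex, so $\rank S_G = |V|$ and $\dim V_{S_G} = 1$). Hence $\ket{G}$ is strongly contextual precisely when the empirical model realised by $S_G$ under local Pauli measurements is strongly contextual, and by Proposition~\ref{prop: paradox} together with the AvN Triple Theorem, this happens if and only if $S_G$ contains an AvN triple, i.e. if and only if $S_G$ is AvN.

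The two directions then split as follows. For the forward direction, I would argue the contrapositive exactly as in the proof of Theorem~\ref{thm: AvN triple conjecture}: if every vertex of $G$ has degree at most $1$, then $G$ is a disjoint union of isolated vertices and single edges, so $\ket{G}$ factors as a tensor product of one- and two-qubit states, none of which is strongly contextual for any choice of local measurements (citing \cite{BrassardMethotTapp2005:MinimumEntangledStatePseudoTelepathy,ABCDKM}). For the reverse direction, if $G$ has a vertex $u$ of degree at least $2$, then choosing two neighbours $v,w \in \mathcal{N}(u)$ produces an AvN triple among $g^u, g^v, g^w$ (or among $g^u, g^ug^v, g^ug^w$), exactly as exhibited in the two cases of the theorem's proof; by Theorem~\ref{thm:AvNtriple->Avn} the subgroup $S_G$ is then AvN, so $\ket{G}$ is strongly contextual.

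I do not expect any genuine obstacle here, since essentially all the work is done inside the proof of Theorem~\ref{thm: AvN triple conjecture}. The only point requiring a little care is the logical equivalence ``$S_G$ is AvN $\Leftrightarrow$ $\ket{G}$ is strongly contextual'' for graph states. The implication from AvN to strong contextuality is Proposition~\ref{prop: paradox}. The converse relies on the maximality of $S_G$: since $\ket{G}$ is the \emph{unique} stabiliser state of $S_G$, the XOR theory $\XORTe{S_G}$ captures \emph{all} the parity constraints on the support of the empirical model, so strong contextuality of that model forces inconsistency of $\XORTe{S_G}$, i.e. $S_G$ is AvN. This is where the restriction to maximal stabiliser subgroups (equivalently, to a single stabiliser state) is essential, and it is the hinge that lets us invoke the ``only if'' half of the AvN Triple Theorem.
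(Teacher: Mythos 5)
Your proof is correct and is essentially the paper's own argument: the paper offers no separate proof, deriving the corollary immediately from the proof of Theorem~\ref{thm: AvN triple conjecture}, whose two cases produce the AvN triple whenever a vertex of degree $\geq 2$ exists (giving strong contextuality via Theorem~\ref{thm:AvNtriple->Avn} and Proposition~\ref{prop: paradox}), and whose tensor-product observation for degree-$\leq 1$ graphs gives the contrapositive of the other direction --- exactly your second paragraph. One remark: the claim in your final paragraph, that strong contextuality directly forces inconsistency of $\XORTe{S_G}$ because the XOR theory captures \emph{all} parity constraints on the support, is neither needed (your second paragraph already closes both directions without it) nor proved in the paper, where the implication from strong contextuality to AvN is obtained only indirectly, from the cycle of implications AvN $\Rightarrow$ strongly contextual $\Rightarrow$ degree-$\geq 2$ vertex $\Rightarrow$ AvN triple $\Rightarrow$ AvN running through the theorem's proof.
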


\begin{corollary}
Every strongly contextual 3-qubit stabiliser state is LC-equivalent to the GHZ state.
\end{corollary}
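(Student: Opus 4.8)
The plan is to reduce to graph states and then enumerate the graphs on three vertices, using the preceding corollary together with Schlingemann's theorem (Theorem~\ref{thm: LU}) and the earlier observation that the star and complete graphs on three vertices are LC-equivalent to the tripartite GHZ state.

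First I would take a strongly contextual $3$-qubit stabiliser state $\ket{\psi}$ and apply Theorem~\ref{thm: LU} to obtain an LC unitary $U$ and a three-vertex graph $G$ with $\ket{\psi} = U\ket{G}$. Since LC operations act by conjugation on measurements and simultaneously on the state, sending local Pauli measurements to local Pauli measurements, they preserve the empirical model up to relabelling and hence all its contextuality properties (as recorded at the start of this subsection); in particular $\ket{G}$ is again strongly contextual. The preceding corollary then forces $G$ to have a vertex $u$ of degree at least $2$. As $G$ has only three vertices, $u$ must be adjacent to both of the other two vertices $v$ and $w$, leaving exactly two possibilities: either $v$ and $w$ are adjacent, so that $G$ is the complete graph on three vertices, or they are not, so that $G$ is the star graph centred at $u$.

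Finally I would invoke the discussion following Theorem~\ref{thm: local complementation}, where both the complete graph and the star graph on three vertices are shown to induce graph states LC-equivalent to $\GHZa{3}$; by transitivity of LC-equivalence, $\ket{\psi}$ is LC-equivalent to $\GHZa{3}$. I do not expect a genuine obstacle here, since the argument amounts to checking the four graphs on three vertices. The only points requiring care are that LC-equivalence transports strong contextuality from $\ket{\psi}$ to the graph state $\ket{G}$, and that a vertex of degree at least $2$ in a three-vertex graph already determines $G$ up to the two GHZ-type cases.
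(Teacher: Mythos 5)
Your proof is correct and follows essentially the same route the paper intends: the paper derives this corollary "immediately" from the AvN Triple Theorem, whose proof already reduces a strongly contextual stabiliser state to a graph state, finds a vertex of degree at least $2$, and identifies the two resulting induced graphs (triangle or star) as LC-equivalent to the tripartite GHZ state. Your explicit handling of the LC-invariance of strong contextuality and the enumeration of three-vertex graphs simply spells out the same argument in detail.
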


%
%
%

\begin{figure}[htbp]
\centering
\includegraphics[scale=0.6]{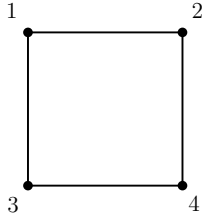}
\caption{The $4$-qubit $2$-dimensional cluster-state.}\label{fig: Cluster4}
\end{figure}

We provide some examples to clarify the statement of Theorem \ref{thm: AvN triple conjecture}.
Cluster states are a fundamental resource in measurement-based quantum computation \cite{Raussendorf, Raussendorf2, Raussendorf3}. The $4$-qubit $2$-dimensional cluster state is described by the graph in Figure \ref{fig: Cluster4}. 
Its stabiliser group $S$ is generated by the following elements of $\Pauli_4$:

\[
\begin{array}{rcccc}
g^1 : & X_1 & Z_2 & Z_3 & I_4\\
g^2 : & Z_1 & X_2 & I_3 & Z_4\\
g^3 : & Z_1 & I_2 & X_3 & Z_4\\
g^4 : & I_1 & Z_2 & Z_3 & X_4
\end{array}
\]
The stabiliser group $S$ contains the following 4 AvN triples, corresponding to the triples of qubits highlighted in Figure \ref{fig: Cluster4.1}:
\begin{equation}\label{eq:triples-eg1}
\begin{aligned}
\begin{array}{rcccc}
g^1    : & X_1 & Z_2 & Z_3 & I_4\\
g^1g^2 : & Y_1 & Y_2 & Z_3 & Z_4\\
g^1g^3 : & Y_1 & Z_2 & Y_3 & Z_4
\end{array}
& &
\begin{array}{rcccc}
g^2    : & Z_1 & X_2 & I_3 & Z_4\\
g^2g^1 : & Y_1 & Y_2 & Z_3 & Z_4\\
g^2g^4 : & Z_1 & Y_2 & Z_3 & Y_4
\end{array}
\\
\begin{array}{rcccc}
g^3    : & Z_1 & I_2 & X_3 & Z_4\\
g^3g^1 : & Y_1 & Z_2 & Y_3 & Z_4\\
g^3g^4 : & Z_1 & Z_2 & Y_3 & Y_4
\end{array}
& &
\begin{array}{rcccc}
g^4    : & I_1 & Z_2 & Z_3 & X_4\\
g^4g^2 : & Z_1 & Y_2 & Z_3 & Y_4\\
g^4g^3 : & Z_1 & Z_2 & Y_3 & Y_4
\end{array}
\end{aligned}
\end{equation}

\begin{figure}[htbp]
\centering
\includegraphics[scale=0.5]{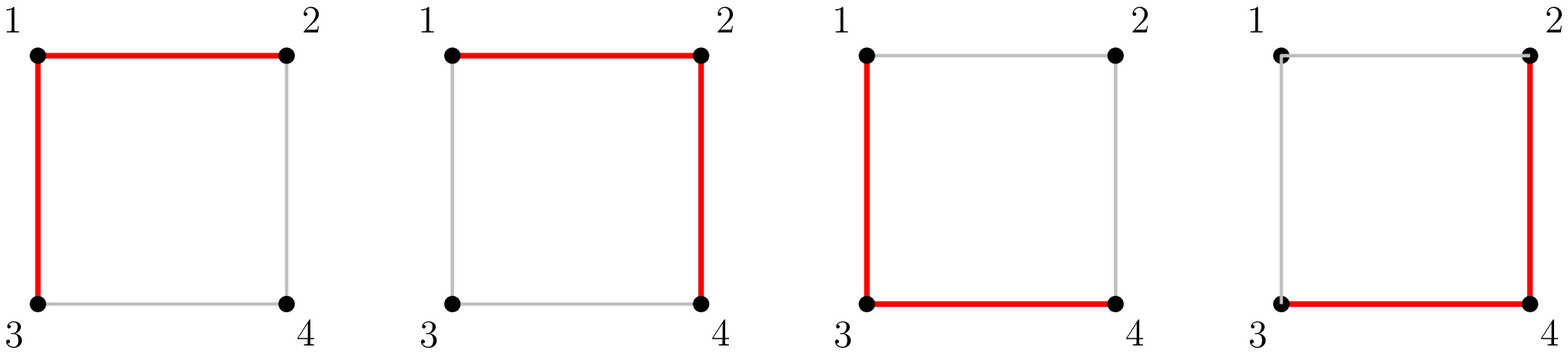}
\caption{Qubits generating the AvN triples of \eqref{eq:triples-eg1}. Each triple of qubits is LC-equivalent to GHZ.}\label{fig: Cluster4.1}
\end{figure}

\begin{figure}[htbp]
\centering
\includegraphics[scale=0.6]{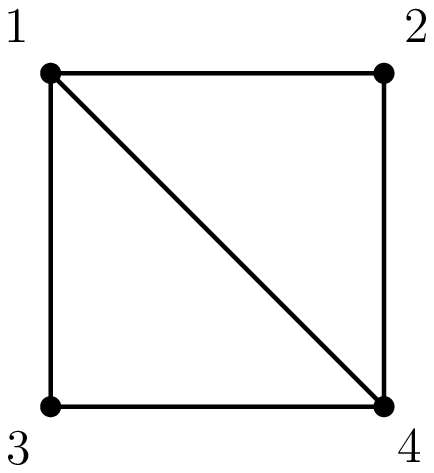}
\caption{The graph $G$ of the state $\ket{G}$.}\label{fig: graph}
\end{figure}

As another example, consider the graph state $\ket{G}$ represented in Figure \ref{fig: graph}.
Its stabiliser $S$ is generated by the following elements of $\Pauli_4$:
\[
\begin{array}{rcccc}
g_1 : & X_1 & Z_2 & Z_3 & Z_4\\
g_2 : & Z_1 & X_2 & I_3 & Z_4\\
g_3 : & Z_1 & I_2 & X_3 & Z_4\\
g_4 : & Z_1 & Z_2 & Z_3 & X_4
\end{array}
\]
and contains the following AvN triples: 
\begin{equation}\label{eq:triples-eg2}
\begin{aligned}
\begin{array}{rcccc}
g_1 : & X_1 & Z_2 & Z_3 & Z_4\\
g_3 : & Z_1 & I_2 & X_3 & Z_4\\
g_4 : & Z_1 & Z_2 & Z_3 & X_4\\
\end{array}
& &
\begin{array}{rcccc}
g_1 : & X_1 & Z_2 & Z_3 & Z_4\\
g_2 : & Z_1 & X_2 & I_3 & Z_4\\
g_4 : & Z_1 & Z_2 & Z_3 & X_4\\
\end{array}
\\
\begin{array}{rcccc}
g_1    : & X_1 & Z_2 & Z_3 & Z_4\\
g_1g_2 : & Y_1 & Y_2 & Z_3 & I_4\\
g_1g_3 : & Y_1 & Z_2 & Y_3 & I_4
\end{array}
& &
\begin{array}{rcccc}
g_4    : & Z_1 & Z_2 & Z_3 & X_4\\
g_4g_2 : & I_1 & Y_2 & Z_3 & Y_4\\
g_4g_3 : & I_1 & Z_2 & Y_3 & Y_4
\end{array}
\end{aligned}
\end{equation}
which correspond to the triples of qubits illustrated in Figure~\ref{fig: triangles}.

\begin{figure}[htbp]
\centering
\includegraphics[scale=0.6]{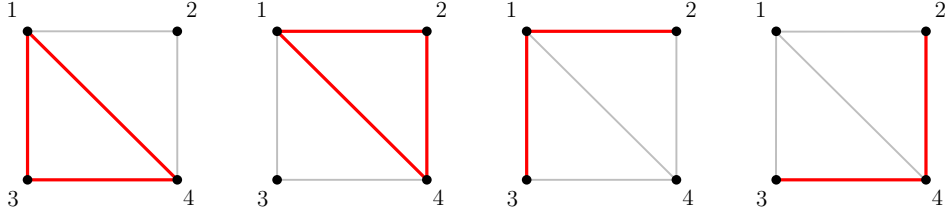}
\caption{Qubits generating the AvN triples of \eqref{eq:triples-eg2}. Each triple of qubits is LC-equivalent to GHZ.}\label{fig: triangles}
\end{figure}

%
%
%
%
%

\section{Applications}\label{sec: method}

In this section we take advantage of the characterisation introduced above to develop a computational method to identify all the possible AvN arguments.

\subsection{Counting AvN triples}
We start by introducing an alternative definition of AvN triple.

\begin{defn}[Alternative Definition of AvN triple]\label{defn: Alternative}
An \emph{AvN triple} in the Pauli $n$-group $\PN$ is a triple $\tuple{e,f,g}$ with global phases $\pm 1$, such that
\begin{enumerate}
\item\label{cond111} For each $i=1,\dots, n$, at least two of $e_i, f_i, g_i$ are equal.
\item\label{cond11} The number $N_g$ of $i$'s such that $e_i=f_i\neq g_i$, all distinct from $I$, is odd.
\item\label{cond22} The number $N_e$ of $i$'s such that $e_i\neq f_i= g_i$, all distinct from $I$, is odd.
\item\label{cond33} The number $N_f$ of $i$'s such that $e_i=g_i\neq f_i$, all distinct from $I$, is odd. 
\end{enumerate}
\end{defn}

The equivalence of the two definitions follows directly from the following lemma.

\begin{lemma}
Let $n\ge 3$. Suppose $e,f,g\in\PN$ have global phase $\pm 1$ and are such that for each $i=1,\dots, n$, at least two of $e_i, f_i, g_i$ are equal. Then $e,f,g$ commute pairwise if and only if $N_e, N_f$ and $N_g$ have the same parity.
\end{lemma}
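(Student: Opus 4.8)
The plan is to reduce pairwise commutation to a parity count and then match that count against the three numbers $N_e, N_f, N_g$. The starting point is the standard commutation criterion for the Pauli group: two elements of $\PN$ commute if and only if the number of coordinates $i$ at which their local components anticommute is even. Global phases are irrelevant here, and two single-qubit Paulis anticommute precisely when both are non-identity and distinct. Writing $A_{ef}, A_{eg}, A_{fg}\in\Ztwo$ for these three anticommutation parities, the assertion ``$e,f,g$ commute pairwise'' is exactly ``$A_{ef}=A_{eg}=A_{fg}=0$''.

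Next I would classify each coordinate $i$ according to the hypothesis that at least two of $e_i, f_i, g_i$ coincide. Every position falls into exactly one of four mutually exclusive types: all three equal, or exactly one of the three being the ``odd one out''. These types partition the coordinates, and I would compute, type by type, their contributions to $A_{ef}, A_{eg}, A_{fg}$. The computation is routine from the single-qubit rule: a position where all three agree contributes $0$ everywhere; a position where, say, $g_i$ is the odd one out contributes to $A_{eg}$ and $A_{fg}$ but not to $A_{ef}$, and only when all three components differ from $I$.

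Carrying this out, I expect to obtain, modulo $2$,
\[
A_{ef}=N_e+N_f,\qquad A_{eg}=N_e+N_g,\qquad A_{fg}=N_f+N_g.
\]
From these identities the lemma follows at once: the commutation conditions $A_{ef}=A_{eg}=A_{fg}=0$ are equivalent to $N_e\equiv N_f$, $N_e\equiv N_g$ and $N_f\equiv N_g\pmod 2$, which hold simultaneously if and only if $N_e, N_f, N_g$ share a common parity (any two of the three congruences force the third).

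The one point requiring care --- and the main obstacle --- is the interplay with identity components. By definition $N_e, N_f, N_g$ count only positions at which all three entries differ from $I$, whereas the ``odd one out'' classification also admits positions where some entry equals $I$. The observation that makes the bookkeeping collapse is that any such position contributes $0$ to the relevant anticommutation count, since anticommutation requires both local operators to be non-identity; hence exactly the all-non-identity positions survive, and these are precisely what $N_e, N_f, N_g$ tally. I would check this explicitly on each of the three ``odd one out'' types, so that no position with an $I$ entry is miscounted. (The hypothesis $n\ge 3$ plays no essential role in this argument and is retained only for consistency with the AvN-triple setting.)
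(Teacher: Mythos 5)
Your proof is correct and takes essentially the same route as the paper's: both reduce pairwise commutation to the parity of the number of coordinates at which the local components anticommute, and both use the at-least-two-equal hypothesis to identify that count, pair by pair, with $N_e+N_f$, $N_e+N_g$, and $N_f+N_g$ modulo $2$, with the same careful handling of identity components. The only difference is organisational (you partition coordinates into ``odd one out'' types and sum contributions across all three pairs at once, whereas the paper decomposes each pair's anticommutation set directly), which does not change the substance of the argument.
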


\begin{proof}
Given two arbitrary elements $h,k\in\PN$ we have
\[
hk=(-1)^{|\{i\mid h_i\neq k_i \;\wedge\; h_i,k_i\neq I\}|}kh.
\]
Thus, $e$ and $f$ commute if and only if $N\coloneqq\left|\{i\mid e_i\neq f_i \;\wedge\; e_i,f_i\neq I\}\right|$ is even. By hypothesis, for each $i$, at least two of $e_i.f_i.g_i$ are equal, hence
\[
N=|\{i\mid g_i=e_i\neq f_i \;\wedge\; e_i,f_i,g_i\neq I\}|+|\{i \mid e_i\neq f_i=g_i \;\wedge\; e_i,f_i,g_i\neq I\}|=N_f+N_e.
\]
Therefore, 
\[
e,f \text{ commute } \biff N_e \text{ and } N_f \text{ have the same parity}.
\]
Similarly,
\[
\begin{split}
f,g \text{ commute } &\biff N_f \text{ and } N_g \text{ have the same parity}\\
e,g \text{ commute } &\biff N_e \text{ and } N_g \text{ have the same parity},
\end{split}
\]
and the result follows. 
\end{proof}

Note that this new definition can be used to derive an alternative proof of the fact that any AvN triple for $n$-partite states can be reduced to an AvN triple that only involves 3 qubits, in accordance with Theorem \ref{thm: AvN triple conjecture}. Indeed, given an AvN triple $\tuple{e,f,g}$ in $\PN$, since $N_g,N_e,N_f$ are odd, we can always choose 3 indices $1\leq i_1,i_2,i_3\leq n$ such that 
\[
e_{i_1} = f_{i_1}\neq g_{i_1}, \qquad e_{i_2} \neq f_{i_2} =g_{i_2},  \qquad e_{i_3} = g_{i_3}\neq f_{i_3}
\]
Clearly, the elements of the triple restricted to these indices constitute an AvN triple in $\Pauli_3$ and therefore an AvN argument. 

The rationale for introducing Definition \ref{defn: Alternative} is that it allows to better understand AvN triples from a computational perspective. We show a first example by providing a closed formula for the number of AvN triples in $\PN$.
\begin{proposition}\label{prop:countingAvN}
Let $n\ge 3$. The number of AvN triples in $\PN$ is given by
\[
8\sum_{k=1}^{\frac{1}{2}(n+[n])-1}{{n}\choose{2k+1}} {{k+1}\choose{k-1}}\cdot 6^{2k+1}\cdot 22^{n-2k-1},
\]
where $[n]\in\mathbb{Z}_2$ denotes the parity of $n$. 
\end{proposition}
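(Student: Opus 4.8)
The plan is to turn the count into an independent site-by-site enumeration and then evaluate a constrained multinomial sum. First I would separate the global phases from the Pauli data: Conditions~\ref{cond111}--\ref{cond33} of Definition~\ref{defn: Alternative} constrain only the component triples $(e_i,f_i,g_i)$ and leave each of the three global phases free in $\{+1,-1\}$. Hence the number of AvN triples is $2^3=8$ times the number of $n$-tuples $\bigl((e_i,f_i,g_i)\bigr)_{i=1}^n$ with entries in $\{X,Y,Z,I\}$ satisfying Conditions~\ref{cond111}--\ref{cond33}; this is the leading factor $8$ in the statement.

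Next I would classify the admissible single-site patterns. Condition~\ref{cond111} leaves $40$ of the $64$ possible triples (those with at least two entries equal). A given site contributes to exactly one of $N_e,N_f,N_g$ precisely when two of its entries agree, the third differs, and all three are distinct from $I$; there are $6$ such patterns for each of the three types ($3$ choices for the repeated Pauli, $2$ for the odd one), hence $18$ active patterns in all. The remaining $40-18=22$ admissible patterns --- the four constant triples and the eighteen two-equal triples that involve $I$ --- contribute to none of the three counts. This is exactly the origin of the constants $6$ and $22$: each site is independently either neutral (weight $22$) or active of one of three types (weight $6$), and a tuple is an AvN triple iff the number of active sites of each type is odd.

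I would then package these local weights globally. Marking the three active types by formal variables $u,v,w$, the required number of component-tuples is the sum of those coefficients of $(22+6u+6v+6w)^n$ in which the exponents of $u,v,w$ are all odd. Grouping by the total number $s=a+b+c$ of active sites and using that a sum of three odd integers is odd, we get $s=2k+1$, with $k$ ranging from $1$ (the minimal case $a=b=c=1$) to the largest value with $2k+1\le n$, namely $\tfrac12(n+[n])-1$. For fixed $k$, selecting which sites are active contributes $\binom{n}{2k+1}$, the Pauli choices on active sites contribute $6^{2k+1}$, and the neutral sites contribute $22^{n-2k-1}$.

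The main obstacle is the remaining inner factor: for fixed $k$, the number of admissible ways to distribute the $2k+1$ active sites among the three types with each type of odd size. I would count the possible size-triples $(a,b,c)$ by the substitution $a=2a'+1,\ b=2b'+1,\ c=2c'+1$, which turns $a+b+c=2k+1$ into $a'+b'+c'=k-1$ with $a',b',c'\ge 0$; stars-and-bars then gives $\binom{k+1}{2}=\binom{k+1}{k-1}$ size-triples, the final binomial factor. The delicate bookkeeping --- and the point I would check against the base case $n=3$ before committing --- is how this size-triple count combines with the assignment of the chosen active sites to the three labelled types; an efficient cross-check is to instead extract the all-odd part of $(22+6u+6v+6w)^n$ directly by the $\pm1$ (parity) filter, which produces a compact four-term expression that can be compared against the closed form term by term.
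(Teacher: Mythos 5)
Your plan reproduces the paper's own counting scheme almost step for step --- the factor $8$ for the phases, the $18$ active and $22$ neutral column patterns (your census of these is cleaner than the paper's), the grouping by the number $2k+1$ of active columns, and stars-and-bars for the size-triples $(N_e,N_f,N_g)$ --- but the step you defer as ``delicate bookkeeping'' is not bookkeeping: it is the crux, and it cannot be completed so as to yield the displayed formula. Once the $2k+1$ active columns are fixed, a triple is determined by assigning to each active column one of the three types together with an ordered pair of distinct Paulis; the number of type assignments with all three class sizes odd is
\[
\sum_{\substack{a+b+c=2k+1\\ a,b,c \text{ odd}}} \binom{2k+1}{a,b,c} \;=\; \frac{3^{2k+1}-3}{4},
\]
which is exactly what your parity filter on $(22+6u+6v+6w)^n$ extracts, and \emph{not} the number $\binom{k+1}{k-1}$ of size-triples: the two differ already at $k=1$ ($6$ versus $1$) and at $k=2$ ($60$ versus $3$). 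Concretely, the base-case check you propose at $n=3$ refutes rather than confirms: the all-odd part of $(22+6u+6v+6w)^3$ is $\binom{3}{1,1,1}\,6^3=1296$, whereas the proposition asserts $216$ (before the factor $8$).

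What your more careful setup has in fact uncovered is that the paper's own proof makes the same omission: it counts the possible values of $(N_e,N_f,N_g)$ by stars and bars but never chooses \emph{which} of the selected columns realise which type, i.e.\ it drops the multinomial factor $\binom{2k+1}{N_e,N_f,N_g}$. The discrepancy at $n=3$ and $n=4$ is exactly a factor of $|S_3|=6$, so for those $n$ the stated formula coincides with the number of \emph{unordered} triples $\{e,f,g\}$ --- which is what the computational tallies $216$ and $19008$ quoted in Section~\ref{sec: method} actually count --- but for $n\ge 5$ even that reading fails, since $\frac{1}{6}\cdot\frac{3^{2k+1}-3}{4}=10\neq 3=\binom{3}{1}$ at $k=2$. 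In short, your framework is sound and, carried through, it yields the ordered count $8\sum_k\binom{n}{2k+1}\frac{3^{2k+1}-3}{4}\,6^{2k+1}\,22^{n-2k-1}$, equivalently $40^n-3\cdot 28^n+3\cdot 16^n-4^n$; the genuine gap in your proposal is that this is never reconciled with the formula in the statement, and no such reconciliation exists, so the argument as proposed cannot be completed into a proof of the proposition as stated.
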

\begin{proof}
The factor of $2^3=8$ corresponds to the possible choices of global phase $\pm 1$ for each element in the triple.

By Definition \ref{defn: Alternative}, an AvN triple $\tuple{e,f,g}$  is essentially determined by three odd numbers $N_e, N_f,N_g$. Their sum $S\coloneqq N_e+N_f+N_g\leq n$ can be seen as the number of columns of the triple that play an active part in the AvN argument. Let us compute the amount of AvN triples having $S$ ``relevant'' columns. We start by counting the number of solutions to the equation 
\[
N_e+N_f+N_g=S,
\]
where $N_e,N_f,N_g,S$ are all odd numbers. Let $k,e,f,g\ge 0$ be integers such that $S=2k+1$ and $N_i=2i+1$ for $i=e,f,g$. We have 
\[
N_e+N_f+N_g=S \biff 2e+1+2f+1+2g+1=2k+1 \biff e+f+g=k-1
\]
By stars and bars~\cite{Feller}, the number of solutions to this equation is ${{k+1}\choose{k-1}}$. By condition \ref{cond111} of Definition \ref{defn: Alternative} we must choose two observables in $\{X,Y,Z\}$ (the order counts) in each of the $S$ relevant columns, for a total of $P_{(3,2)}^S=6^{2k+1}$. Finally, we have $8$ possible configurations of each of the remaining $n-S$ non-relevant columns, namely
\[
\begin{matrix}
I\\
I \\
I
\end{matrix}
\qquad
\begin{matrix}
P\\
P \\
P
\end{matrix}
\qquad 
\begin{matrix}
P & I & I\\
I & P & I \\
I & I & P &
\end{matrix}
\qquad 
\begin{matrix}
P & P & I\\
P & I & P \\
I & P & P &
\end{matrix}
\]
where $P$ has to be chosen in $\{X,Y,Z\}$ for a total of $(3\cdot 7+1)^{n-S}=22^{n-2k-1}$ possibilities. Hence, the number of AvN triples in $\PN$ having $S=2k+1$ relevant columns is 
\[
N_S\coloneqq {{k+1}\choose{k-1}}\cdot 6^{2k+1}\cdot 22^{n-2k-1}.
\]
Now, the amount of odd numbers of relevant columns $S\leq n$ that we can select is given by 
\[
\sum_{k=1}^{\frac{1}{2}(n+[n])-1}{{n}\choose{2k+1}},
\]
and the result follows. 
\end{proof}

\subsection{Generating AvN triples}

We devote this last section to the presentation of a computational method to generate all the AvN triples contained in $\PN$. Until now, we only had a rather limited number of examples of quantum-realisable models featuring All-vs-Nothing proofs of strong contextuality. Thanks to the AvN triple theorem \ref{thm: AvN triple conjecture}, the technique we introduce allows us to find all such models for a sufficiently small $n$. 

\textbf{Check vectors} \cite{Chuang} are a useful way to represent elements of $\PN$ in a computation-friendly way. Given an element $P\coloneqq \alpha(P_i)_{i=1}^n\in\PN$, its check vector $r(P)$ is a $2n$-vector
\[
r(P)=(x_1,x_2,\dots, x_n, z_1, z_2,\dots, z_n)\in\mathbb{Z}_2^{2n}
\]
whose entries are defined as follows
\[
(x_i,z_i)=
\begin{cases}
(0,0) & \text{ if } P_i=I\\
(1,0) & \text{ if } P_i=X\\
(1,1) & \text{ if } P_i=Y\\
(0,1) & \text{ if } P_i=Z.
\end{cases}
\]
Every check vector $r(P)$ completely determines $P$ up to phase (i.e. $r(P)=r(\alpha P)$ for all $\alpha\in\{\pm 1,\pm i\}$). We can use this representation to express the conditions for an AvN triple. 
More specifically, we represent an AvN triple, up to the global phases of each of its elements,
as a matrix $M\in M_{3\times 2n}(\mathbb{Z}_2)$ whose rows are the check vectors of each element of the triple. Condition \ref{cond111} of Definition \ref{defn: Alternative} can be rewritten as 
\begin{equation}\label{equ: condition1}
\forall 1\leq j\leq n. \; \exists i,k\in\{1,2,3\}. \;
\begin{cases}
M_{i,j}=M_{k,j} & \\
M_{i,n+j}=M_{k,n+j}
\end{cases}
\end{equation}
The numbers $N_e,N_f, N_g$ can also be easily computed. For instance, $N_g$ equals the cardinality of the set 
\[
\begin{split}
\{i\in\{1,2,3\} \mid  M_{1,j}=M_{3,j} \;\wedge\; & M_{1,n+j}=M_{3,n+j} \;\wedge\; (M_{1,j}\neq M_{2,j} \vee M_{1,n+j}\neq M_{2,n+j})\\
\;\wedge\; & (M_{1,j}\neq 0 \vee M_{1,n+j}\neq 0) \;\wedge\; (M_{2,j}\neq 0 \vee M_{2,n+j}\neq 0)\;\}
\end{split}
\]
Hence, in order to find all the AvN triples in $\PN$ we need to solve the following problem:
\[
\begin{aligned}
& \textbf{Find all} & ~~~~~ & M \in M_{3\times 2n}(\mathbb{Z}_2) \\
& \textbf{such that} & ~~~~~ & M \text{ verifies } \eqref{equ: condition1},\\
& & ~~~~~ & \text{$N_g$, $N_e$, and $N_f$ are odd},\\
\end{aligned}
\]
which is easily programmable.

An implementation of this method using Mathematica \cite{Mathematica} can be found in \cite{Code}, where we present the algorithm and the resulting list of all $216$ AvN triples in $\Pauli_3$ and all $19008$ AvN triples in $\Pauli_4$, disregarding the choice of global phases $\pm1$ for each element --
in order to get the total number of AvN triples from Proposition~\ref{prop:countingAvN}, note that these numbers need to be multiplied by a factor of $8$ to account for this choice of these global phases.
By Theorem \ref{thm: AvN triple conjecture}, this list generates all the possible AvN arguments for $3$-qubit and $4$-qubit stabiliser states. 

\section{Conclusions}
The recent formalisation and generalisation of All-vs-Nothing arguments in stabiliser quantum mechanics \cite{Abramsky2} allowed us to study their properties from a purely mathematical standpoint.

Thanks to this framework, we have introduced an important characterisation of AvN arguments based on the combinatorial concept of AvN triple \cite{Abramsky2}, leading to a computational technique to identify all such arguments for stabiliser states. 
The graph state formalism, which played a crucial r\^{o}le in the proof of the AvN triple theorem, also allowed us to infer an important structural feature of AvN arguments, namely that any such argument can be reduced to an AvN proof on three qubits, 
which is essentially a standard GHZ argument.
This result shows in particular that the GHZ state is the only 3-qubit stabiliser state, up to LC-equivalence, admitting an AvN argument for strong contextuality.  

Our computations provide a very large number of quantum-realisable strongly contextual empirical models admitting AvN arguments. These new models could potentially find applications in quantum information and computation, as well as contributing to the ongoing theoretical study of strong contextuality as a key feature of quantum mechanics \cite{Abramsky1, Abramsky2, Abramsky3, ABCDKM}.

The abstract formulation of generalised AvN arguments has also allowed us to introduce new insights into the connections between logic and the study of contextuality. Recent work on logical Bell inequalities \cite{Abramsky4} and the relation between contextuality and semantic paradoxes \cite{Abramsky2} suggests a strong connection between these two domains. In this work, we have taken a first step towards a formal characterisation of this link in the quantum-realizable case by showing the existence of a Galois connection between subgroups of the Pauli $n$-group and subspaces of the Hilbert space of $n$-qubits, which can be seen as the stabiliser-theoretic counterpart of the Galois connection between syntax and semantics in logic.

\ack{
The authors would like to thank Nadish de Silva, Kohei Kishida, and Shane Mansfield for helpful discussions. This work was carried out in part while the authors visited the Simons Institute for the Theory of Computing (supported by the Simons Foundation) at the University of California, Berkeley, as participants of the Logical Structures in Computation programme.}

\funding{Support from the following is gratefully acknowledged: 
the Simons Institute for the Theory of Computing;
EPSRC EP/N018745/1, `Contextuality as a Resource in Quantum Computation' (SA, RSB);
EPSRC Doctoral Training Partnership and Oxford--Google Deepmind Graduate Scholarship (GC).}

\bibliographystyle{vancouver}
\bibliography{AvN}

%
%
%
%
%

\end{document}